\documentclass[11pt]{article}

\usepackage{arxiv}
\usepackage[utf8]{inputenc}
\usepackage[T1]{fontenc}

\usepackage{amsmath,amssymb,amsthm}
\usepackage{mathtools}
\usepackage{amsfonts}
\usepackage{nicefrac}

\usepackage{microtype}
\usepackage{booktabs}
\usepackage{graphicx}
\usepackage{enumitem}
\usepackage{url}
\usepackage{natbib}
\usepackage{doi}

\usepackage{algorithm}
\usepackage{algpseudocode}

\usepackage{hyperref}

\usepackage{tikz}
\usetikzlibrary{arrows.meta,positioning,calc}

\newtheorem{theorem}{Theorem}

\newtheorem{corollary}[theorem]{Corollary}
\newtheorem{proposition}[theorem]{Proposition}
\newtheorem{definition}{Definition}
\newtheorem{remark}{Remark}

\title{\textbf{Practical Livelock Analysis in Parameterized Unidirectional Rings\\}}
\author{
  Aly Farahat\\
  \textit{Independent Researcher}\\
  \texttt{aly.farahat@perpetual-dynamics.com}
}

\date{}

\begin{document}
\maketitle

\begin{abstract}
We develop a practical framework for livelock analysis in self-disabling unidirectional ring protocols. Klinkhamer and Ebnenasir established that livelock detection for parameterized rings is $\Sigma^0_1$-complete and livelock-freedom verification is $\Pi^0_1$-complete, via reduction from the periodic domino problem. We observe that lifting the analysis from the transition space to an \emph{equivariant product space}---the space of transition--witness pairs---reveals structure that supports effective verification.

We construct a \emph{product transition graph} (at most $|T|^2$ nodes) that captures all livelocks: every livelock maps into this graph as a witness-closed subgraph. The maximal such subgraph $G^*(T)$ is computable in polynomial time ($O(|T|^8)$ worst case) via monotone fixed-point iteration. When $G^*(T) = \emptyset$, the protocol is \emph{provably livelock-free} for all ring sizes---a sound and complete livelock-freedom verifier.

When $G^*(T) \neq \emptyset$, we apply a backtracking search that backward-propagates each simple cycle through $G^*$ until the chain either closes into a torus (confirming a livelock) or dies (no livelock from that cycle). This two-phase algorithm---polynomial-time pruning followed by finite combinatorial verification---produces three outcomes: \textsc{Free}, \textsc{Livelock}, or \textsc{Inconclusive}. Across 4{,}349 protocols tested (including an adversarial protocol derived from Klinkhamer and Ebnenasir's tiling construction and Kari's 14-tile aperiodic set converted via their SE gadget), the algorithm is conclusive in every case with zero errors.

We further demonstrate that the algorithm extends to \emph{non-self-disabling} protocols via a protocol transformation. This extends the algorithm's applicability to all parameterized unidirectional ring protocols.

Python implementation and usage instructions are at \url{https://github.com/cosmoparadox/mathematical-tools}.
\end{abstract}

\section{Introduction}
\label{sec:intro}

Self-stabilization~\cite{dijkstra1974} guarantees that a distributed system recovers from arbitrary transient faults to a legitimate configuration, regardless of the initial state. A key obstacle to self-stabilization in ring protocols is \emph{livelock}: an infinite execution in which processes continually change state but never reach legitimacy. Determining whether a given protocol admits a livelock---for any ring size---is a fundamental parameterized verification problem.

\paragraph{Background.} Klinkhamer and Ebnenasir~\cite{klinkhamer-ebnenasir} established that livelock detection for parameterized rings of self-disabling processes is $\Sigma^0_1$-complete and livelock-freedom verification is $\Pi^0_1$-complete, via a reduction from the periodic domino problem. Their torus characterization---the equivalence between livelocks and doubly-periodic tilings---is foundational and underlies our work. Notably, they also showed that the \emph{synthesis} of self-stabilizing rings is decidable~\cite{klinkhamer-ebnenasir}, and Farahat's equivariance framework~\cite{farahat-diss} has been used to prove livelock freedom of difficult protocols such as Dijkstra's self-stabilizing token ring~\cite{gouda-haddix}.

\paragraph{Our observation.} Building on Klinkhamer and Ebnenasir's torus characterization, we observe that lifting the analysis from individual transitions to \emph{pairs} of transitions reveals structure that supports effective verification. A single transition cycle does not determine when equivariant closure occurs: equivariance fixes the predecessor's own and wr values (its H-walk), but not its pred values, which select specific transitions from $T$. Different pred assignments yield different witness cycles, which close at different periods---or not at all.

We construct a \emph{product graph} over transition--witness pairs---at most $|T|^2$ nodes---that captures the equivariant structure. A pair commits to specific transitions at every position, including pred values, so its equivariant period is determined. Every livelock maps into the product graph (Theorem~\ref{thm:necessary}), producing a witness-closed subgraph. The product graph's simple cycles---of length at most $|T|^2$---identify equivariant pairs that are candidates for generating livelocks via backward propagation.

The maximal witness-closed subgraph $G^*(T)$ is computable in polynomial time. When $G^* = \emptyset$, the protocol is provably livelock-free for all ring sizes. When $G^* \neq \emptyset$, backward propagation of simple cycles through $G^*$ serves as a sound livelock detector: any cycle whose backward chain closes in $G^*$ generates a genuine livelock.

\paragraph{Our contribution.}
\begin{enumerate}[nosep]
\item \emph{Necessary condition.} We prove that every livelock maps into the product graph as a witness-closed subgraph (Theorem~\ref{thm:necessary}). Consequently, $G^*(T) = \emptyset$ implies livelock-freedom for all ring sizes. This provides a sound and complete polynomial-time livelock-freedom verifier.

\item \emph{Sufficient condition via backtracking.} We show that any simple cycle in $G^*$ whose backward chain closes generates a genuine livelock (Theorem~\ref{thm:sufficient}). A backtracking search over simple cycles provides a sound livelock detector.

\item \emph{Practical algorithm.} The two-phase algorithm---polynomial-time $G^*$ computation followed by finite backtracking verification---produces conclusive results on all 4{,}349 protocols tested, including adversarial instances derived from Klinkhamer and Ebnenasir's tiling construction and Kari's aperiodic tiles. Zero errors.

\item \emph{Extension to non-self-disabling protocols.} A protocol transformation computes the transitive closure of local transition chains under each predecessor value. The augmented transition set exactly captures all executions of the original protocol as equivariant walks, enabling the product graph algorithm to analyze arbitrary unidirectional ring protocols.
\end{enumerate}

Our construction handles non-deterministic protocols---a strictly broader class than previously studied. We extend to $(1,1)$-asymmetric protocols covering Dijkstra's token ring, and to non-self-disabling protocols via a burst closure transformation (Section~\ref{sec:non-sd}).

Our approach instantiates a recurring pattern in verification: problems that appear unbounded in their native representation become tractable when lifted to a space where the governing invariant is explicit (Section~\ref{sec:related}).

\section{Definitions}
\label{sec:defs}

\subsection{Unidirectional Ring Protocols}

A \emph{unidirectional ring} of size $K$ consists of processes $P_0, P_1, \ldots, P_{K-1}$ arranged in a directed cycle. Each process $P_i$ maintains a state variable $v_i \in \mathbb{Z}_m = \{0, 1, \ldots, m-1\}$ and can read $v_i$ and $v_{i-1 \bmod K}$.

\begin{definition}[Transition]
A \emph{transition} is a triple $t = (p, o, w) \in \mathbb{Z}_m^3$, meaning: if the predecessor's value is $p$ and the process's own value is $o$, the process may write $w$. We call $p$ the \emph{pred value}, $o$ the \emph{own value}, and $w$ the \emph{written value}.
\end{definition}

\begin{definition}[Self-disabling]
A protocol $T$ is \emph{self-disabling} if for every transition $(p, o, w) \in T$, there is no transition $(p, w, u) \in T$ for any $u$. That is, after a process fires $(p, o, w)$, it cannot fire again while the predecessor's value remains $p$---it must wait for an intervening state change from the predecessor. Throughout this paper, all protocols are self-disabling.
\end{definition}

\begin{definition}[Protocol]
A \emph{protocol} is a finite self-disabling set $T \subseteq \mathbb{Z}_m^3$. A transition $(p, o, w) \in T$ is \emph{enabled} at process $P_i$ in configuration $(v_0, \ldots, v_{K-1})$ if $v_{i-1} = p$ and $v_i = o$.
\end{definition}

\begin{definition}[Livelock]
A \emph{livelock} is a global periodic execution of a ring: $K$ processes fire transitions indefinitely, returning the entire ring to the same global configuration after $N$ steps, without ever reaching a legitimate state. As shown by Klinkhamer and Ebnenasir~\cite{klinkhamer-ebnenasir}, a livelock is equivalently a valid tiling of a $K \times N$ torus, where each row is a closed walk of local transitions executed by one process, and adjacent rows satisfy the equivariance conditions (Section~\ref{sec:product}). The walk length $N$ is the \emph{local period}; the number of processes $K$ is the \emph{ring size}.
\end{definition}

\subsection{The $(l,q)$-Asymmetry Model}

\begin{definition}[$(l,q)$-asymmetry]
A ring protocol has $(l,q)$-asymmetry if there are $l$ \emph{distinguished} process types (each appearing exactly once) and $q$ \emph{other} types (each appearing one or more times, filling the remaining $K - l$ positions).
\end{definition}

The key cases are:
\begin{itemize}[nosep]
\item \textbf{$(0,1)$-asymmetry (symmetric):} All processes use the same transition set $T$. This is the primary focus of this paper.
\item \textbf{$(1,1)$-asymmetry:} One distinguished process $P_0$ uses $T_0$; all others use $T_o$. This covers Dijkstra's token ring. We extend our framework to this case in Section~\ref{sec:asymmetry}.
\item \textbf{General $(l,q)$:} Stated as a future direction.
\end{itemize}

\subsection{The H-Graph}

\begin{definition}[Shadow and witness]
The \emph{shadow} of a consecutive transition pair $(t_i, t_j)$ with $\operatorname{own}(t_j) = \operatorname{wr}(t_i)$ is $\sigma(t_i, t_j) = (\operatorname{pred}(t_i), \operatorname{pred}(t_j))$. A transition $u$ \emph{witnesses} shadow $(a, b)$ if $\operatorname{own}(u) = a$ and $\operatorname{wr}(u) = b$.
\end{definition}

\begin{definition}[H-graph]
\label{def:hgraph}
Given a transition set $L$, the \emph{H-graph} $H(L)$ is a directed graph with vertex set $L$ and an edge $t_i \to t_j$ (an \emph{H-edge}) whenever $\operatorname{own}(t_j) = \operatorname{wr}(t_i)$. A closed walk in $H(L)$ is a \emph{pseudolivelock}: a sequence of local transitions that a single process could execute cyclically, if appropriately enabled by its predecessor. A pseudolivelock is necessary but not sufficient for a global livelock---it captures the local structure without guaranteeing the existence of consistent witnessing across the ring.
\end{definition}

\section{The Product Transition Graph}
\label{sec:product}

\subsection{Equivariance Conditions}

In a livelock on a ring of size $K$, each process $P_i$ executes a closed walk $W^{(i)}$ of length $N$ in the H-graph. Adjacent processes are coupled through a \emph{zigzag}: at each time step, the predecessor's transition both enables the current step (through its pre-fire state) and enables the next step (through its written value). This coupling is captured by four \emph{equivariance conditions}.

For a product graph edge $(t_k, w_k) \to (t_{k+1}, w_{k+1})$, the connecting witness $w_{k+1}$ links $t_k$ to $t_{k+1}$ in the zigzag $\ldots w_k \to t_k \to w_{k+1} \to t_{k+1} \to w_{k+2} \ldots$:
\begin{enumerate}[nosep]
\item $\operatorname{own}(w_{k+1}) = \operatorname{pred}(t_k)$: $w_{k+1}$'s pre-fire state is the value $t_k$ reads from its predecessor.
\item $\operatorname{wr}(w_{k+1}) = \operatorname{pred}(t_{k+1})$: $w_{k+1}$'s written value is what $t_{k+1}$ reads from its predecessor.
\item $\operatorname{wr}(t_k) = \operatorname{own}(t_{k+1})$: H-edge in the $t$-walk.
\item $\operatorname{wr}(w_k) = \operatorname{own}(w_{k+1})$: H-edge in the $w$-walk.
\end{enumerate}

\begin{figure}[ht]
\centering
\begin{tikzpicture}[
    >=Stealth,
    every node/.style={font=\small},
    tnode/.style={draw, circle, minimum size=0.7cm, inner sep=1pt},
    wnode/.style={draw, circle, minimum size=0.7cm, inner sep=1pt, fill=blue!10},
]
  \node[tnode] (t0) {$t_k$};
  \node[tnode, right=2.5cm of t0] (t1) {$t_{k\!+\!1}$};
  \node[tnode, right=2.5cm of t1] (t2) {$t_{k\!+\!2}$};
  \draw[->, thick] (t0) -- node[above, font=\footnotesize] {(3)\; $\operatorname{wr}(t_k) = \operatorname{own}(t_{k+1})$} (t1);
  \draw[->, thick] (t1) -- (t2);
  \node[above=0.3cm of t1, font=\footnotesize\itshape, text=black!60] {$t$-walk (process $P_i$)};

  \node[wnode, below=2.5cm of t0, xshift=-1.25cm] (w0) {$w_k$};
  \node[wnode, below=2.5cm of t0, xshift=1.25cm] (w1) {$w_{k\!+\!1}$};
  \node[wnode, below=2.5cm of t1, xshift=1.25cm] (w2) {$w_{k\!+\!2}$};
  \draw[->, thick, blue] (w0) -- node[below, font=\footnotesize, text=blue] {(4)\; $\operatorname{wr}(w_k) = \operatorname{own}(w_{k+1})$} (w1);
  \draw[->, thick, blue] (w1) -- (w2);
  \node[below=0.3cm of w1, font=\footnotesize\itshape, text=blue!60] {$w$-walk (process $P_{i-1}$)};

  \draw[->, dashed, black!60] (w1) -- node[left, font=\footnotesize, pos=0.35, align=right] {(1)\; $\operatorname{own}(w_{k+1})$\\$= \operatorname{pred}(t_k)$} (t0);
  \draw[->, dashed, black!60] (w1) -- node[right, font=\footnotesize, pos=0.35, align=left] {(2)\; $\operatorname{wr}(w_{k+1})$\\$= \operatorname{pred}(t_{k+1})$} (t1);
\end{tikzpicture}
\caption{The zigzag coupling. The connecting witness $w_{k+1}$ sits between $t_k$ and $t_{k+1}$, linking them through its pre-fire state (condition~1) and written value (condition~2). Conditions (3) and (4) are H-edges within each walk.}
\label{fig:zigzag}
\end{figure}

\begin{remark}[Emergent shadow witness]
Conditions (1) and (2) together show that $w_{k+1}$ witnesses the shadow $\sigma(t_k, t_{k+1})$: $\operatorname{own}(w_{k+1}) = \operatorname{pred}(t_k)$ and $\operatorname{wr}(w_{k+1}) = \operatorname{pred}(t_{k+1})$. The shadow witness emerges directly from the zigzag structure. Combining conditions (1) and (4) gives $\operatorname{wr}(w_k) = \operatorname{pred}(t_k)$, which serves as the node condition for the product graph.
\end{remark}

\subsection{Construction}

\begin{definition}[Product transition graph]
\label{def:product}
Given a transition set $T$, the \emph{product transition graph} $G_\times(T)$ is defined as follows.

\textbf{Nodes:} Pairs $(t, w) \in T \times T$ such that $\operatorname{wr}(w) = \operatorname{pred}(t)$.

\textbf{Edges:} $(t_k, w_k) \to (t_{k+1}, w_{k+1})$ if all four equivariance conditions hold.

The product graph is built \emph{once} from $T$ and is not reconstructed during the algorithm.
\end{definition}

\begin{proposition}[Size]
$G_\times(T)$ has at most $|T|^2$ nodes and at most $|T|^4$ edges.
\end{proposition}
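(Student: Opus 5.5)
The bound follows directly from counting against the definition of $G_\times(T)$ (Definition~\ref{def:product}); no earlier result is needed beyond the construction itself.

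\textbf{Nodes.} By definition, the node set is the subset of $T \times T$ consisting of pairs $(t,w)$ with $\operatorname{wr}(w) = \operatorname{pred}(t)$. Since it is a subset of $T\times T$, its cardinality is at most $|T\times T| = |T|^2$. The node condition can only shrink this set, so nothing more is required.

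\textbf{Edges.} Each edge $(t_k,w_k)\to(t_{k+1},w_{k+1})$ is an ordered pair of nodes, so the edge set is a subset of $V\times V$, where $V$ is the node set. Hence the number of edges is at most $|V|^2 \le (|T|^2)^2 = |T|^4$. The four equivariance conditions only filter which ordered pairs appear, so they cannot increase the count. Since edges are defined as a relation on nodes, there are no multi-edges.

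\textbf{Tightness and cost.} If one wants to comment on sharpness, the bounds are attained only in degenerate cases; for example, $m=1$ is impossible for a self-disabling protocol, so the bounds are generally loose. The statement only asserts upper bounds, so I would not pursue tightness. I would also remark that the construction time is $O(|T|^4)$, since each candidate ordered pair of node pairs is checked against four constant-time equalities. This remark is what feeds the later complexity claims.

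There is no real obstacle here. The only point needing care is to count edges as ordered pairs of nodes rather than as pairs of raw transitions $(t_k,t_{k+1})$. Counting the latter would lose the witness components and give the wrong quantity.
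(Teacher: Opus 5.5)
Your counting argument is correct and is exactly the justification the paper leaves implicit (it states the proposition without proof, as immediate from Definition~\ref{def:product}): nodes form a subset of $T\times T$, and edges form a subset of ordered pairs of nodes. Your aside on tightness is slightly off but harmless: the node bound is attained by the self-disabling protocol $\{(c,o,c) : o \neq c\}$, since every pair satisfies $\operatorname{wr}(w)=c=\operatorname{pred}(t)$, whereas a nonempty $T$ attains the edge bound only if $T=\{(c,c,c)\}$, which is not self-disabling.
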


\subsection{Backward Propagation}

A cycle in $G_\times(T)$ encodes the coupling between two adjacent processes. In a ring, this coupling extends to all $K$ processes through \emph{backward propagation}: the $w$-walk of one coupling becomes the $t$-walk of the predecessor's coupling. The product graph captures this structure internally: if an arc has $w$-pair $(w_1, w_2)$, the predecessor coupling requires an arc with $t$-pair $(w_1, w_2)$---which is another arc in the \emph{same} product graph.

This observation motivates the three pruning operations in the algorithm.

\section{Livelock Analysis via the Product Graph}
\label{sec:bounded}

We develop the analysis through three subsections: the product graph's role in capturing equivariant walks, the witness-closed property, and the necessary/sufficient conditions for livelocks.

\subsection{The Product Graph Captures Pairwise Equivariant Walks}

By Definition~\ref{def:product}, a cycle in an equivariance graph $G_\times(S)$ is precisely a pair of closed walks---the $t$-walk and the $w$-walk---coupled by the four equivariance conditions at every step. A cycle of length $N$ in $G_\times(S)$ returns to its starting node $(t_0, w_0)$, so both the $t$-walk and the $w$-walk close with the same length $N$ by the graph structure.

Conversely, any pair of equivariant closed walks of the same length $N$ over $S$ defines a cycle of length $N$ in $G_\times(S)$. The product graph is a faithful representation of pairwise equivariant closed walks---it imposes no conditions beyond equivariance.

\begin{remark}[Self-disabling and the zigzag]
The product graph's faithfulness depends critically on self-disabling. In a self-disabling protocol, a process cannot fire again until its predecessor changes state. The coupling between adjacent processes is therefore one-to-one: each step of the $t$-walk corresponds to exactly one connecting witness in the $w$-walk, producing the zigzag $\ldots w_k \to t_k \to w_{k+1} \to t_{k+1} \to \ldots$ Without self-disabling, a process could fire multiple times between consecutive predecessor firings, the coupling would no longer be one-to-one, and the product graph would not faithfully capture the equivariant structure.

This one-to-one coupling is the token propagation discipline identified by Farahat and Ebnenasir~\cite{farahat-icdcs2012}: in a self-disabling ring, an enabled process fires, disables itself, and enables its successor---propagating the token forward without duplication or destruction. The zigzag is the algebraic expression of this discipline.
\end{remark}

The use of pairs is not incidental, and the reason is subtle. Given a closed $t$-cycle of period $N$, equivariance determines the predecessor's $\operatorname{own}$ and $\operatorname{wr}$ values at every position: the predecessor's H-walk is forced and closes automatically. However, the predecessor's $\operatorname{pred}$ values---which select which specific \emph{transitions} from $T$ realize the H-walk---are not determined by the $t$-cycle alone. In a non-deterministic protocol, multiple transitions may share the same $(\operatorname{own}, \operatorname{wr})$ but have different $\operatorname{pred}$ values. Different $\operatorname{pred}$ assignments yield different transition walks, which may close at different periods or not at all (see Section~\ref{sec:experiments}, Trial~56). This is the period determination problem: the $t$-cycle determines the predecessor's H-walk (a sequence of $\operatorname{own}/\operatorname{wr}$ values), but not its transition walk (a sequence of specific transitions from $T$).

The product graph resolves this by committing to specific transitions at every position---including their $\operatorname{pred}$ values. A cycle in the product graph returns to the \emph{same node} $(t_0, w_0)$, guaranteeing that the transition walk closes exactly, not just the H-walk. Pairs of transitions are the minimal unit that resolves the period: once a transition cycle is paired with a specific witness cycle, the equivariant closing period is determined.

\subsection{Witness-Closed Subgraphs}

\begin{definition}[Witness-closed subgraph]
\label{def:witness-closed}
A subgraph $G \subseteq G_\times(T)$ is \emph{witness-closed} if it satisfies two properties:
\begin{enumerate}[nosep]
\item \textbf{Cyclicity:} $G$ is a union of strongly connected components---every arc lies on a cycle.
\item \textbf{Backward closure:} For every arc on a cycle with $w$-pair $(w_1, w_2)$, there exists an arc $(w_1, u) \to (w_2, v)$ on a cycle in $G$.
\end{enumerate}
\end{definition}

\subsection{Necessary Condition: Livelock $\Rightarrow$ Witness-Closed Subgraph}

\begin{theorem}[Necessary condition for livelock]
\label{thm:necessary}
If a self-disabling unidirectional ring protocol with transition set $T$ admits a livelock, then $G_\times(T)$ contains a non-empty witness-closed subgraph.
\end{theorem}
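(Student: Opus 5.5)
Given a livelock, we use the torus characterization of Klinkhamer and Ebnenasir: the livelock is a $K\times N$ torus whose row $i$ is a closed walk $W^{(i)} = (t^{(i)}_0,\ldots,t^{(i)}_{N-1})$ of transitions of $T$ executed by $P_i$, with indices of time taken mod $N$ and of processes mod $K$. Adjacent rows $W^{(i)}$ and $W^{(i-1)}$ satisfy the four equivariance conditions after a suitable alignment of time indices. Concretely, we choose a shift so that $w_k := t^{(i-1)}_{k}$ is the predecessor firing that occurs between the $(k-1)$-st and $k$-th firings of $P_i$. The plan is to define $G$ directly as the union over all $i$ of the cycles traced by these adjacent-row couplings. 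We then check that each piece lies in $G_\times(T)$ and that the union is witness-closed.

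\textbf{Step 1 (each coupling is a cycle in $G_\times(T)$).} For fixed $i$, consider the sequence of pairs $(t^{(i)}_k, t^{(i-1)}_k)$ for $k\in\mathbb{Z}_N$ (after the alignment shift). Self-disabling makes the coupling one-to-one, so between consecutive firings of $P_i$ there is exactly one firing of $P_{i-1}$; this is the Remark on self-disabling and the zigzag. Hence $P_i$'s predecessor value at its $k$-th firing is $\operatorname{wr}(w_k)$, which is the node condition $\operatorname{wr}(w_k)=\operatorname{pred}(t_k)$. The same value equals $\operatorname{own}(w_{k+1})$, since the predecessor does not change in between, and this gives condition (1). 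Condition (2) follows because $\operatorname{wr}(w_{k+1})$ is what $P_i$ reads at its next firing. Conditions (3) and (4) hold because each row is a closed walk in $H(T)$. Since both rows close with period $N$, the pairs form a closed walk $C_i$ of length $N$ in $G_\times(T)$.

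\textbf{Step 2 (witness-closure).} Let $G$ be the union of the arcs of the closed walks $C_0,\ldots,C_{K-1}$. Every arc of $G$ lies on the closed walk $C_i$ it came from, so every arc lies on a cycle. Thus $G$ is a union of strongly connected pieces, which gives cyclicity. For backward closure, take an arc of $C_i$ with $w$-pair $(t^{(i-1)}_k, t^{(i-1)}_{k+1})$. The coupling between rows $i-1$ and $i-2$ yields $C_{i-1}$, which contains the arc $(t^{(i-1)}_k, u)\to(t^{(i-1)}_{k+1}, v)$ with $u,v$ taken from row $i-2$. Because the ring is cyclic, $i-1$ is always defined mod $K$, so every required arc is present in $G$. $G$ is non-empty because a livelock has $N\ge 1$.

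\textbf{Main obstacle.} The delicate point is the time alignment. The $k$-th firing of $P_i$ must be matched to exactly one predecessor firing, with a consistent index shift $s_i$ between rows $i$ and $i-1$. Moreover, the $t$-index used in $C_{i-1}$ must be compatible with the $w$-index used in $C_i$, since backward closure requires literally the same pair $(t^{(i-1)}_k, t^{(i-1)}_{k+1})$. I would handle this by fixing a single indexing of each row $W^{(i)}$ once and for all, and defining $C_i$ with the shift $s_i$ applied only to its $w$-component. Arcs of $C_{i-1}$ then use row $i-1$'s own indexing, and the required arc appears at index $k+s_i$ in row $i-1$'s indexing, which is still an arc of $C_{i-1}$. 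Establishing the bijection between firings itself relies on self-disabling, via the interleaving argument of the zigzag remark. It also needs a check that in a livelock every process fires the same number $N$ of times per global period, which holds because firings alternate between $P_{i-1}$ and $P_i$.
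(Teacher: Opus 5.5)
Your proposal is correct and matches the paper's proof: form the union of the $K$ closed walks obtained by pairing each torus row $W^{(i)}$ with $W^{(i-1)}$, get cyclicity because every arc lies on its own closed walk, and get backward closure because the $w$-walk of the coupling $(P_i,P_{i-1})$ is the $t$-walk of $(P_{i-1},P_{i-2})$; your per-row shift $s_i$ makes explicit an index alignment the paper leaves implicit. One small repair: self-disabling alone gives only \emph{at least} one firing of $P_{i-1}$ between consecutive firings of $P_i$, so derive $n_{i-1}\ge n_i$ for the per-period firing counts, conclude from the ring that all $n_i$ are equal, and only then deduce exactly one predecessor firing per gap (or take both facts directly from the cited torus characterization, as the paper does), rather than deriving equal counts from alternation, which is circular.
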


\begin{proof}
A livelock at ring size $K$ is a valid tiling of the $K \times N$ torus~\cite{klinkhamer-ebnenasir}. Each row $i$ is a closed walk $W^{(i)}$ of length $N$ in the H-graph, and adjacent rows satisfy the four equivariance conditions. Let $S$ be the set of all transitions appearing in the tiling, and consider $G_\times(S)$.

For each adjacent pair of rows $(W^{(i)}, W^{(i-1)})$: both are closed walks of length $N$ (by the torus structure). The pair defines a closed walk of length $N$ in $G_\times(S)$, whose arcs all lie on a cycle. This establishes cyclicity.

For backward closure: the $w$-walk of coupling $(P_i, P_{i-1})$ is $W^{(i-1)}$, which is exactly the $t$-walk of coupling $(P_{i-1}, P_{i-2})$. The torus wraps after $K$ steps inward, providing cycle-level backward closure. Since cycle-level backward closure trivially implies arc-level backward closure (each arc on a cycle $C$ has its $w$-pair as a $t$-pair on the backward walk $C'$), the subgraph is witness-closed.

The arcs from all $K$ couplings form a subgraph of $G_\times(T)$. This subgraph satisfies cyclicity and backward closure---it is a non-empty witness-closed subgraph.
\end{proof}

\begin{corollary}[Livelock-freedom verifier]
\label{cor:freedom}
If $G^*(T) = \emptyset$ (the maximal witness-closed subgraph is empty), then the protocol is livelock-free for all ring sizes.
\end{corollary}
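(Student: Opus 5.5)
We argue by contraposition from Theorem~\ref{thm:necessary}, combined with the observation that witness-closed subgraphs are closed under union, so that the maximal one $G^*(T)$ is well defined and contains every witness-closed subgraph. Assume the protocol admits a livelock at some ring size $K$ with local period $N$. By Theorem~\ref{thm:necessary}, $G_\times(T)$ then contains a non-empty witness-closed subgraph $G$. If $G \subseteq G^*(T)$, then $G^*(T) \neq \emptyset$, which contradicts the hypothesis. Hence no livelock exists for any $K$.

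The substantive step is showing that $G^*(T)$ really is the largest witness-closed subgraph. We will prove that if $G_1$ and $G_2$ are witness-closed, then so is $G_1 \cup G_2$.
\begin{enumerate}[nosep]
\item \emph{Cyclicity.} Every arc of $G_1 \cup G_2$ lies on a cycle inside $G_1$ or inside $G_2$, and that cycle is still present in the union.
\item \emph{Backward closure.} Take an arc of the union with $w$-pair $(w_1, w_2)$. It lies on a cycle in some $G_j$, so witness-closure of $G_j$ supplies an arc $(w_1, u) \to (w_2, v)$ on a cycle in $G_j$. That cycle remains in the union.
\end{enumerate}
The one subtlety is that an arc may lie on a cycle in the union that uses arcs from both parts. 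Definition~\ref{def:witness-closed} only requires some qualifying cycle for each arc, and we have one in $G_j$, so this causes no difficulty.

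Since $G_\times(T)$ is finite, the union of all its witness-closed subgraphs is therefore itself witness-closed. This union is $G^*(T)$, and it contains every witness-closed subgraph, in particular the $G$ produced by the livelock.

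The main obstacle is definitional. If $G^*(T)$ is instead introduced as the output of a monotone fixed-point iteration (repeatedly deleting arcs not on cycles and arcs whose $w$-pair has no matching $t$-pair arc on a cycle), we must also show that this iteration never deletes an arc of a witness-closed $G$. We do this by induction on the iteration rounds, with the invariant that $G$ is contained in the current graph $H$.
\begin{enumerate}[nosep]
\item \emph{Cycle pruning.} Every arc of $G$ lies on a cycle of $G \subseteq H$, so it survives.
\item \emph{Backward-closure pruning.} Every arc of $G$ has its required backward arc on a cycle of $G \subseteq H$, so it survives.
\end{enumerate}
Thus $G \subseteq H$ at every round, and hence $G \subseteq G^*(T)$. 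Either reading of $G^*(T)$ therefore yields $\emptyset \neq G \subseteq G^*(T)$, which completes the contrapositive.
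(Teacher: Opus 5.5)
Your proposal is correct and takes essentially the paper's route: the corollary is the contrapositive of Theorem~\ref{thm:necessary}, combined with the union-closure fact (the paper's Uniqueness remark) that makes $G^*(T)$ contain every witness-closed subgraph, and your fixed-point invariant mirrors the completeness half of Theorem~\ref{thm:algo-correctness}. To match Algorithm~\ref{alg:main} exactly you would also have to check its Square step and its $u, v \in S_t$ form of the backward test; both leave every arc of $G$ in place by the same appeal to backward closure of $G$ inside $G \subseteq H$.
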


\subsection{Sufficient Condition via Backtracking}

When $G^*(T) \neq \emptyset$, we cannot immediately conclude that a livelock exists. Arc-level backward closure guarantees that every arc has a backward arc on a cycle, but this does not guarantee that a \emph{closed walk} backward-propagates into a torus (see Remark~\ref{rem:gap}).

We resolve this through backtracking: for each simple cycle $C$ in $G^*$, backward-propagate through $G^*$ by searching for a cycle in $G^*$ whose $t$-walk equals $C$'s $w$-walk. If the backward chain closes (a walk repeats), the torus is constructed and a livelock is confirmed.

\begin{theorem}[Sufficient condition for livelock]
\label{thm:sufficient}
Let $C_0$ be a simple cycle in $G^*(T)$. If the backward chain $C_0, C_1, C_2, \ldots$ (where each $C_{k+1}$ is a closed walk in $G^*$ whose $t$-walk equals $C_k$'s $w$-walk) closes---i.e., $C_j$ is a rotation of $C_i$ for some $j > i$---then the protocol admits a livelock at ring size $K = j - i$ with local period $N = |C_0|$.
\end{theorem}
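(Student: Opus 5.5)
The plan is to construct the $K\times N$ torus tiling explicitly from the closed segment $C_i, C_{i+1}, \ldots, C_j$ of the backward chain. Then I invoke the torus characterization of Klinkhamer and Ebnenasir~\cite{klinkhamer-ebnenasir}, in the converse direction: a valid tiling whose rows are closed H-walks and whose adjacent rows satisfy the four equivariance conditions is a livelock. Set $K = j-i$ and $N = |C_0|$.

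\textbf{Step 1: all chain elements have the same length.} By the definition of the chain, $C_{k+1}$ is a closed walk whose $t$-walk equals the $w$-walk of $C_k$. The $w$-walk of a closed walk of length $L$ in $G_\times(T)$ is a closed walk of length $L$ (as noted in the subsection on pairwise equivariant walks). By induction, $|C_k| = N$ for all $k$.

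\textbf{Step 2: define the rows.} For $r = 0,\ldots,K-1$, let row $r$ of the torus be the $t$-walk $W^{(r)}$ of $C_{i+r}$. Each row is a closed walk of length $N$ in $H(T)$, by condition (3) at every arc of $C_{i+r}$. The $w$-walk of $C_{i+r}$ is $W^{(r+1)}$ for $r<K-1$, so the rows form the predecessor sequence. Every arc of $C_{i+r}$ satisfies conditions (1)--(4), so each adjacent pair $(W^{(r)}, W^{(r+1)})$ is equivariantly coupled in the zigzag sense. In this pairing the $w$-row plays the role of process $P_{r-1}$ relative to $P_r$; after fixing the orientation of indices, this is just a relabeling.

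\textbf{Step 3: close the wrap-around, which I expect to be the main obstacle.} The last coupling is given by $C_{j-1}$, whose $w$-walk is the $t$-walk of $C_j$. The hypothesis says $C_j$ is a rotation of $C_i$ by some shift $s$, so the $t$-walk of $C_j$ is $W^{(0)}$ shifted by $s$. We cannot simply re-index time for row $0$, because that would break the coupling between rows $0$ and $1$.

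My first approach is to propagate the shift backward. Rotating $C_i$ by $s$ rotates its $w$-walk by $s$, so replacing each $C_{i+r}$ by its rotation by $s$ keeps the chain consistent. Do this for every $r$, absorbing the accumulated offset in time. An offset in the time index between adjacent rows is harmless for a torus tiling, because equivariance is a condition on consecutive arcs read cyclically modulo $N$. Concretely, I will define the tiling on $\mathbb{Z}_K\times\mathbb{Z}_N$ with row $r$ read from $C_{i+r}$, and the seam row read with the offset $s$. I then check conditions (1)--(4) across the seam directly from the arc equalities of $C_{j-1}$ composed with the rotation.

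If that check fails, the fallback is to unroll the chain $n$ times, using the rows of the rotated chain repeatedly, until the total offset $ns \equiv 0 \pmod N$. This always happens for some $n \le N$. The result is a genuine torus of size $nK\times N$ with no offset, and it still witnesses a livelock. One then argues that the offset torus already gives ring size $K$, since a global configuration recurring up to a rotation of time is still periodic.

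\textbf{Step 4: conclude.} Finally, note that the torus never reaches a legitimate state. This holds because every row fires a transition of $T$ at each step, and self-disabling guarantees that the zigzag yields a consistent interleaving of the firings into a global periodic execution, as in the Remark on self-disabling and the zigzag. This gives a livelock at ring size $K=j-i$ with local period $N$.
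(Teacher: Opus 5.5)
\paragraph{There is a genuine gap, and it is exactly where you expected it.} Steps~1--2 are fine. The Step~3 claim that a time offset at the seam is harmless is false, and Step~4 then asserts the very interleaving that can fail.

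The offset cannot be absorbed. Re-indexing row $r$ by $e_r$ changes the offsets of its two couplings by $+e_r$ and $-e_r$, so the total offset $D \bmod N$ around the ring is invariant.

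This invariant decides whether an execution exists. In your indexing, row $r+1$ is the predecessor of row $r$ (rows mod $K$). Write $a^{(r)}_n$ for the $n$-th firing of row $r$. The zigzag of coupling $r$ then demands $a^{(r+1)}_{n+\delta_r}\prec a^{(r)}_n\prec a^{(r+1)}_{n+\delta_r+1}$ in time, where $\delta_r$ is an integer lift of that coupling's offset. Set $\Delta=\sum_r\delta_r\equiv D \pmod N$. Chaining the left inequalities once around the ring gives $a^{(0)}_{n+\Delta}\prec a^{(0)}_n$. Chaining the right ones gives $a^{(0)}_n\prec a^{(0)}_{n+\Delta+K}$.

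Hence, for $K\ge 2$, an execution on $K$ processes exists iff some integer $\Delta\equiv D \pmod N$ lies in $\{-(K-1),\dots,-1\}$. Sufficiency follows from a linear time function $m-r\beta$ in shifted coordinates, and $-\Delta$ is the number of tokens in flight. In particular, your offset-free torus has $\Delta\equiv 0$, so it is a circular wait unless $K>N$. Neither the absorbed-offset torus nor the unrolled one gives an execution at the ring size you claim.

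\paragraph{No argument can repair this step, because the ring-size claim of the statement is itself false.} Take $T=\{(2,0,1),(0,1,2),(1,2,0)\}$ over $\mathbb{Z}_3$. Here $P_i$ increments $v_i$ iff $v_i-v_{i-1}\equiv 1$, and $T$ is self-disabling. With $t_0=(2,0,1)$, $t_1=(0,1,2)$, $t_2=(1,2,0)$, the graph $G_\times(T)=G^*(T)$ is the single $3$-cycle $(t_0,t_1)\to(t_1,t_2)\to(t_2,t_0)\to(t_0,t_1)$. Its $w$-walk is its $t$-walk advanced by one step. So $C_1$ is a rotation of $C_0$, and $C_3=C_0$ exactly.

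Yet the $1$-ring is never enabled, and the $3$-ring terminates from every configuration within three firings. The criterion above, with $D\equiv K \pmod 3$, correctly predicts livelocks exactly at $K=2$ and $K\ge 4$.

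The paper's own proof simply asserts that $C_i,\dots,C_{j-1}$ tile a $K\times N$ torus and never addresses the seam, so it has the same hole.

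What your Steps~1--2 plus the criterion do prove is the qualitative claim that the \textsc{Livelock} verdict needs. Concatenating $n$ copies of the segment gives $nK$ rows whose total offset has a representative in $\{-N,\dots,-1\}$, and this is admissible once $nK\ge N+1$. So a livelock exists, but not necessarily at ring size $j-i$.
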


\begin{proof}
The segment $C_i, C_{i+1}, \ldots, C_{j-1}$ provides $K = j-i$ pairs of equivariant closed walks, each of length $N$, with each $w$-walk serving as the next $t$-walk, and the orbit closing by the repetition condition. This is a valid tiling of the $K \times N$ torus, constituting a livelock~\cite{klinkhamer-ebnenasir}.
\end{proof}

\begin{remark}[Termination of backtracking]
For each simple cycle of length $N \leq |T|^2$, the backward chain visits closed walks of length $N$ in $G^*$. There are finitely many such walks (bounded by $|G^*|^N$). The chain either closes (livelock found) or reaches a dead end (no backward walk exists in $G^*$) within finitely many steps.
\end{remark}

\begin{remark}[The gap between arc-level and cycle-level closure]
\label{rem:gap}
Arc-level backward closure does not imply cycle-level backward closure. Consecutive backward arcs can be composed into a walk satisfying the four equivariance conditions in $G_\times(T)$, but this composed walk may not be in $G^*$: the composed walk's own backward arcs may require transitions not present in $T$, causing it to be pruned during the fixed-point iteration.

The precise obstruction is \emph{pred alignment}: the backward walk of a closed walk has transitions whose pred values depend on the choice of backward arcs. At the wrap boundary, two transitions may share $(\operatorname{own}, \operatorname{wr})$ but differ in pred. The next backward level requires pred equality to chain. Each backward level demands alignment that the previous level does not guarantee.

A natural question is whether it suffices to check only simple cycles: if no simple cycle backward-closes, does it follow that no compound walk can close either? The answer is no, because backward propagation does not preserve walk topology. A compound walk (two simple cycles joined at a shared node) can backward-propagate to a simple cycle, and a simple cycle can backward-propagate to a compound walk. This topology mixing prevents decomposing a compound walk's backward chain into the backward chains of its simple components. Consequently, the failure of all simple cycles does not rule out the success of some compound walk---and checking compound walks of all lengths is an unbounded search. This is the mechanism by which the undecidability of the periodic domino problem~\cite{klinkhamer-ebnenasir} manifests in the product graph framework.

This gap is exposed by Kari's 14-tile aperiodic set (Section~\ref{sec:experiments}): via Klinkhamer and Ebnenasir's SE gadget, it produces a 54-transition self-disabling protocol with $G^* \neq \emptyset$ (272 arcs), yet all 5{,}000+ simple cycles fail backward propagation---consistent with Kari's proof that no periodic tiling exists.
\end{remark}

\section{The Algorithm}
\label{sec:algorithm}

By Corollary~\ref{cor:freedom}, $G^*(T) = \emptyset$ implies livelock-freedom. We compute the maximal witness-closed subgraph, then apply backtracking verification when it is non-empty.

\begin{definition}[Maximal product graph]
\label{def:maximal-product}
The \emph{maximal product graph} $G^*(T)$ is the largest witness-closed subgraph of $G_\times(T)$. Its \emph{support} is $L^* = \{t \in T : t \text{ appears in some node of } G^*(T)\}$.
\end{definition}

\begin{remark}[Uniqueness]
$G^*(T)$ is well-defined and unique: the witness-closed property is closed under union of arc sets (the union of two witness-closed subgraphs is witness-closed), so the maximal such subgraph exists. A non-empty witness-closed subgraph exists if and only if $G^*(T) \neq \emptyset$, since $G^*(T)$ contains every witness-closed subgraph.
\end{remark}

The algorithm computes $G^*(T)$ by building $G_\times(T)$ and iteratively pruning arcs that violate cyclicity or backward closure, until the fixed point is reached.

\begin{algorithm}[ht]
\caption{Computing the Maximal Product Graph $G^*(T)$}
\label{alg:main}
\begin{algorithmic}[1]
\Require Transition set $T$ (symmetric protocol)
\Ensure \textsc{Livelock-Free}, \textsc{Livelock}, or \textsc{Inconclusive}
\State Build full equivariance graph $G_\times(T)$ (Definition~\ref{def:product})
\Repeat
    \State \textbf{SCC:} Remove all arcs not on a cycle \Comment{enforce cyclicity}
    \State $S_t \gets \{\tau : (\tau, w) \text{ is on a cycle for some } w\}$ \Comment{$t$-components}
    \State \textbf{Square:} Remove arcs where $w_k \notin S_t$ or $w_{k+1} \notin S_t$
    \State \textbf{Backward:} For each surviving arc with $w$-pair $(w_1, w_2)$:
    \Statex \hspace{3.5em} if no arc $(w_1, u) \to (w_2, v)$ exists in $G_\times$ with $u, v \in S_t$, remove the arc
    \Statex \hspace{3.5em} \Comment{enforce backward closure}
\Until{no arcs removed}
\If{surviving arcs remain}
    \For{each simple cycle $C$ in the surviving graph}
        \State Backward-propagate $C$ through $G^*$: find closed walks in $G^*$
        \Statex \hspace{3.5em} whose $t$-walk matches $C$'s $w$-walk, iteratively
        \If{the backward chain closes (a walk repeats)}
            \State \Return \textsc{Livelock}
        \EndIf
    \EndFor
    \State \Return \textsc{Inconclusive} \Comment{$G^* \neq \emptyset$ but no cycle closes}
\Else
    \State \Return \textsc{Livelock-Free} \Comment{$G^*(T) = \emptyset$}
\EndIf
\end{algorithmic}
\end{algorithm}

\medskip

The algorithm enforces the two witness-closed properties through three pruning operations:
\begin{itemize}[nosep]
\item \textbf{SCC} enforces cyclicity by removing arcs not on any cycle.
\item \textbf{Square} is an auxiliary step that extracts $S_t$---the set of transitions appearing as $t$-components on cycles---and removes arcs whose $w$-components are not in $S_t$. This is implied by backward closure (if $(w_1, u) \to (w_2, v)$ is on a cycle, then $w_1$ and $w_2$ appear as $t$-components), but computing it explicitly provides the $S_t$ set needed by the backward closure check and accelerates convergence.
\item \textbf{Backward closure} checks each surviving arc's $w$-pair $(w_1, w_2)$: does $G_\times$ contain an arc $(w_1, u) \to (w_2, v)$ with $u, v \in S_t$?
\end{itemize}

The iteration is monotone (arcs are only removed) and operates on a finite set, so it terminates.

\begin{theorem}[Algorithm correctness]
\label{thm:algo-correctness}
Algorithm~\ref{alg:main} computes $G^*(T)$.
\end{theorem}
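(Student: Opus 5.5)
The proof is a standard greatest-fixed-point argument with two halves. Soundness: every arc of $G^*(T)$ survives every pruning round. Termination and closure: the terminal arc set $A_\infty$ is itself witness-closed, so it is contained in $G^*(T)$ by maximality (Remark on uniqueness). Together these give $A_\infty = G^*(T)$.

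\emph{Step 1 (invariant).} Let $A_r$ be the arc set after round $r$, with $A_0$ the arcs of $G_\times(T)$. Induct on $r$ to show $G^*(T)\subseteq A_r$, checking each pruning operation in turn.
\begin{itemize}[nosep]
\item \textbf{SCC:} every arc of $G^*$ lies on a cycle inside $G^*\subseteq A_r$, so it lies on a cycle of $A_r$ and is not removed.
\item \textbf{Square:} $S_t$ is computed from cycles of $A_r$. For an arc of $G^*$ with $w$-pair $(w_1,w_2)$, backward closure gives an arc $(w_1,u)\to(w_2,v)$ on a cycle of $G^*$. Hence $w_1,w_2$ are $t$-components of nodes on cycles of $A_r$, i.e.\ $w_1,w_2\in S_t$.
\item \textbf{Backward:} the same witnessing arc $(w_1,u)\to(w_2,v)$ lies in $G_\times$. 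Its nodes $(w_1,u),(w_2,v)$ lie on a cycle of $G^*$, and the successor and predecessor nodes on that cycle are $G^*$-nodes. Applying backward closure to the arcs around those nodes puts $u$ and $v$ themselves into the $t$-component set of $G^*$, hence $u,v\in S_t$.
\end{itemize}
This last point is the only delicate check. I expect it to need a short lemma: every transition occurring as a $w$-component of a node on a cycle of a witness-closed subgraph also occurs as a $t$-component on a cycle of it. This follows by applying backward closure to the arc entering (or leaving) that node.

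\emph{Step 2 (termination).} Each round removes at least one arc or stops, and the arc set is finite, so the loop halts after at most $|T|^4$ rounds.

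\emph{Step 3 (closure).} At the fixed point, verify that $A_\infty$ is witness-closed. Cyclicity holds because the SCC step removed nothing. For backward closure, take an arc with $w$-pair $(w_1,w_2)$. Since the Backward step removed nothing, some arc $(w_1,u)\to(w_2,v)$ exists in $G_\times$ with $u,v\in S_t$.

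\textbf{Main obstacle:} Definition~\ref{def:witness-closed} requires this witnessing arc to lie \emph{on a cycle in} $A_\infty$, whereas the algorithm checks only membership in $G_\times$ with endpoints in $S_t$. As written, this is a weaker condition, so the maximal fixed point of the algorithm's operator could a priori be larger than $G^*$.

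I would try to close the gap in one of two ways. The first is to interpret the Backward step (as the surrounding prose suggests) as searching among \emph{surviving} arcs. In that case the witnessing arc is in $A_\infty$, and cyclicity of $A_\infty$ places it on a cycle, which finishes the argument. If the literal reading must be kept, the second option is to state the theorem for the witness-closed notion relative to that check. Either way, I would explicitly fix the reading in which the backward witness ranges over the current surviving arc set, since Steps 1 and 3 then go through verbatim.
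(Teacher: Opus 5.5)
Your proposal is correct and follows the paper's route: show that no arc of a witness-closed subgraph (in particular of $G^*(T)$) is ever pruned, and that the terminal arc set is itself witness-closed, so maximality forces equality. The mismatch you flag is real. The paper's soundness half asserts that every surviving arc has a backward arc \emph{on a cycle} of the output, which is justified only under your reading of the Backward step as ranging over the current surviving arcs, not by the literal check in $G_\times$ with $u,v\in S_t$. Your lemma placing $u,v$ in $S_t$ also spells out a step the paper covers only with ``inductively''.
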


\begin{proof}
\textbf{Soundness} (the output is witness-closed). At the fixed point, no further arcs can be removed. The surviving arcs form a union of SCCs (by the SCC step), and every arc on a cycle has a backward arc on a cycle (by the backward closure step). Therefore the output satisfies cyclicity and backward closure.

\textbf{Completeness} (the output is maximal). We show that no arc of any witness-closed subgraph is ever removed. Let $G$ be any witness-closed subgraph, and consider any arc $e$ in $G$. It lies on a cycle in $G$ (by cyclicity), so SCC does not remove it. Its $w$-components appear as $t$-components on cycles in $G$ (by backward closure of $G$), so the square step does not remove it. Its $w$-pair has a backward arc on a cycle in $G$ (by backward closure of $G$), and these arcs also survive (by the same argument, inductively), so the backward closure step does not remove it. Therefore $G \subseteq \text{output}$ for every witness-closed subgraph $G$, and in particular $G^*(T) \subseteq \text{output}$. Combined with soundness, $\text{output} = G^*(T)$.
\end{proof}

\begin{corollary}[Soundness]
The algorithm is sound in both directions. \textsc{Livelock-Free} is correct: $G^*(T) = \emptyset$ implies no livelock exists (Theorem~\ref{thm:necessary}). \textsc{Livelock} is correct: a closing backward chain constitutes a valid torus tiling (Theorem~\ref{thm:sufficient}). The \textsc{Inconclusive} outcome occurs when $G^* \neq \emptyset$ but no simple cycle's backward chain closes; this is consistent with both livelock existence (at a period beyond simple cycles) and livelock freedom.
\end{corollary}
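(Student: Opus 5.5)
The corollary bundles three claims, and I would prove each by pointing to the matching earlier result, with Theorem~\ref{thm:algo-correctness} as the bridge from the algorithm's output to $G^*(T)$.

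\emph{The \textsc{Livelock-Free} branch.} By Theorem~\ref{thm:algo-correctness}, the fixed point of the pruning loop in Algorithm~\ref{alg:main} equals $G^*(T)$. So the algorithm returns \textsc{Livelock-Free} exactly when $G^*(T)=\emptyset$. Now suppose a livelock existed at some ring size $K$. Theorem~\ref{thm:necessary} would then give a non-empty witness-closed subgraph of $G_\times(T)$. By the uniqueness remark, $G^*(T)$ contains every witness-closed subgraph, so $G^*(T)\neq\emptyset$, a contradiction. This is just Corollary~\ref{cor:freedom} applied to the computed graph.

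\emph{The \textsc{Livelock} branch.} The algorithm returns \textsc{Livelock} only when a simple cycle $C_0$ of the surviving graph has a backward chain $C_0,C_1,\dots$ of closed walks in $G^*$ in which some walk repeats. Each step of the chain matches the $w$-walk of $C_k$ to the $t$-walk of $C_{k+1}$, so every $C_k$ has length $|C_0|$.

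\begin{itemize}
\item I must make sure that ``a walk repeats'' means $C_j$ equals $C_i$ up to rotation, which is the hypothesis of Theorem~\ref{thm:sufficient}.
\item Exact equality is the special case of zero rotation.
\item For a nonzero rotation, the starting point of $C_i$ is re-indexed, and the segment $C_i,\dots,C_{j-1}$ is then a valid $K\times N$ torus tiling with $K=j-i$.
\end{itemize}

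Theorem~\ref{thm:sufficient} then yields the livelock.

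\emph{The \textsc{Inconclusive} branch.} Here I only need to argue that neither conclusion is forced. If $G^*\neq\emptyset$, Theorem~\ref{thm:necessary} no longer applies in the contrapositive, so freedom cannot be concluded. The failure of all simple cycles does not rule out compound walks, by Remark~\ref{rem:gap}. The Kari example shows that the freedom side of this outcome actually occurs. Since the corollary only claims consistency with both possibilities, no stronger argument is needed.

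The main obstacle is the rotation bookkeeping in the \textsc{Livelock} branch. When $C_j$ is a nontrivial rotation of $C_i$, I would have to check that the torus still closes consistently, shifting the time index of the wrapped row. If a pure rotation causes trouble, I would first try a fallback: iterate the chain further, or take $K$ to be a suitable multiple of $j-i$, so that the accumulated rotation returns to zero and the wrap closes exactly.
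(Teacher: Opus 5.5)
Your reduction is the paper's own: \textsc{Livelock-Free} from Theorem~\ref{thm:algo-correctness} plus Theorem~\ref{thm:necessary}, \textsc{Inconclusive} as a mere non-implication, and \textsc{Livelock} by quoting Theorem~\ref{thm:sufficient}. For the first branch you only need the half of Theorem~\ref{thm:algo-correctness} saying the output contains every witness-closed subgraph.

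The step you add yourself, the rotation bookkeeping, rests on a false premise. You treat exact closure as the safe case and propose to reach it by passing to a multiple of $j-i$. But chain the zigzag order $w_n<t_n<w_{n+1}$ around a ring of $K\ge 2$ processes. It is acyclic only if the wrap coupling carries an index offset $D$ with $0<D<K$; $D$ is the number of enabled processes, which is conserved. The chain fixes $D$ only modulo $N$, and re-indexing the $C_k$ cannot change that residue. With zero net rotation you need $N\mid D$, which is impossible when $K\le N$.

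Concretely, take $T=\{(1,0,1),(0,1,0)\}$ (copy the predecessor), with $t_a=(1,0,1)$ and $t_b=(0,1,0)$. Then $G_\times(T)=G^*(T)$ is the 2-cycle $(t_a,t_a)\to(t_b,t_b)\to(t_a,t_a)$, whose $t$- and $w$-walks coincide, so $C_1=C_0$ exactly with $N=2$. Yet rings of size $1$ and $2$ have no livelock: at size $1$ nothing is ever enabled, and at size $2$ any firing makes $v_0=v_1$. Size $3$ does have one, with two circulating tokens. So ``the segment $C_i,\dots,C_{j-1}$ is a valid $K\times N$ torus, hence a livelock at $K=j-i$'' fails here. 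The ring size asserted in Theorem~\ref{thm:sufficient} fails too. Your fallback also fails if the multiple is chosen merely to cancel the rotation.

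The corollary needs only that \emph{some} livelock exists, and that can be recovered as follows.
\begin{enumerate}
\item If $C_j$ is $C_i$ rotated by $r$, continue the chain with the correspondingly rotated copies of $C_i,\dots,C_{j-1}$. After $M$ rounds you have $K=M(j-i)$ couplings with accumulated rotation $Mr$.
\item Choose $M$ with $K>N$. The interval $(0,K)$ then meets every residue class mod $N$, so an admissible offset $D$ exists.
\item Index events by unrolled ring position $q$ and firing number $n$. The potential $f(q,n)=an+bq$ with $b/a=D/K$ is well defined on the ring and strictly increases along every precedence constraint.
\item Hence any linear extension is an $N$-periodic execution in which each firing is enabled, by the node condition and the H-edges.
\end{enumerate}
This gives a livelock at ring size $K$, which in general is not $j-i$.
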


\begin{theorem}[Complexity]
Phase~1 (computing $G^*(T)$) terminates in at most $O(|T|^4)$ iterations, each running in $O(|T|^4)$ time. Total: $O(|T|^8)$ worst case. Phase~2 (backtracking) is bounded: for each simple cycle of length $N \leq |T|^2$, the backward chain visits at most $|T|^{2N}$ distinct closed walks before repeating or dying. The total backtracking computation is finite but not polynomial.
\end{theorem}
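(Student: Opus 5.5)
The plan is to bound Phase~1 by counting removable arcs and costing each pruning operation, and to bound Phase~2 by a pigeonhole argument on closed walks of fixed length.

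\emph{Phase~1, iteration count.} By the Size proposition, $G_\times(T)$ has at most $|T|^2$ nodes and $|T|^4$ arcs. The graph is built once, and the loop of Algorithm~\ref{alg:main} only ever deletes arcs. Every iteration except the last removes at least one arc, so the loop runs at most $|T|^4+1 = O(|T|^4)$ times.

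\emph{Phase~1, cost per iteration.} I would cost the three operations separately.
\begin{itemize}
\item \textbf{SCC.} Tarjan's algorithm runs in $O(V+E)=O(|T|^4)$. An arc lies on a cycle exactly when both endpoints are in the same SCC, so deleting the other arcs is a linear scan.
\item \textbf{Square.} Computing $S_t$ takes one pass over the surviving nodes. Each arc is then tested against a membership bitmap over $T$, for $O(|T|^4)$ in total.
\item \textbf{Backward.} Done naively, each arc searches all arcs for a witness, giving $O(|T|^8)$ per iteration. This is too slow, so I would precompute a Boolean table $B[w_1][w_2]$. Its entry is true iff some surviving arc $(w_1,u)\to(w_2,v)$ exists with $u,v\in S_t$. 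Building $B$ is one pass over the arcs, $O(|T|^4)$, after initializing $|T|^2$ entries. Each arc is then checked in $O(1)$.
\end{itemize}
Each iteration therefore costs $O(|T|^4)$, and Phase~1 costs $O(|T|^4)\cdot O(|T|^4)=O(|T|^8)$.

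\emph{Phase~2.} A simple cycle in $G^*\subseteq G_\times(T)$ repeats no node, so its length satisfies $N\le |T|^2$. Every walk $C_k$ in the backward chain (Theorem~\ref{thm:sufficient}) is a closed walk of length $N$ in $G^*$. Such a walk is determined by its sequence of $N$ nodes, so there are at most $(|T|^2)^N=|T|^{2N}$ of them. The chain stops as soon as no backward walk exists (it dies) or some $C_j$ equals a rotation of an earlier $C_i$ (it closes). By pigeonhole, one of these happens within $|T|^{2N}+1$ steps. A chain can branch, since several closed walks may match a given $w$-walk. In that case the backtracking explores a tree of depth at most $|T|^{2N}$ and branching at most $|T|^{2N}$, which is still finite. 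The number of simple cycles is itself finite but may be exponential in $|T|$, so Phase~2 is finite but not polynomial. That is exactly what the statement claims, and no polynomial bound needs to be proved.

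\emph{Main obstacle.} The delicate step is keeping the Backward check at $O(|T|^4)$ per iteration rather than $O(|T|^8)$; the precomputed table $B$ is what achieves this. A second subtlety is that $B$ must be rebuilt after the Square step in each iteration, because $S_t$ changes. The rebuild is a single linear pass, so the per-iteration bound is unaffected.
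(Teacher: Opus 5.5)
Your proposal is correct and follows essentially the same route as the paper: the arc count $|T|^4$ bounds the number of iterations, each pruning pass is linear in the number of arcs, and a pigeonhole count of at most $|T|^{2N}$ closed walks of length $N$ makes Phase~2 finite. You go beyond the paper in two places, filling in points it leaves implicit: your precomputed table $B[w_1][w_2]$ justifies its bare claim that the backward check is linear, and that table works equally well with Algorithm~\ref{alg:main}'s literal query against $G_\times$ rather than against surviving arcs; your tree argument also handles branching in the backtracking search.
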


\begin{proof}
Phase~1: the product graph has $O(|T|^4)$ arcs. Each iteration removes at least one arc. SCC and the pruning checks are linear in the number of arcs. Phase~2: the backward chain for a cycle of length $N$ visits closed walks of length $N$ in a graph of at most $|T|^2$ nodes. The number of distinct walks is finite (at most $|T|^{2N}$), so the chain terminates.
\end{proof}

\begin{remark}[Practical complexity]
For self-disabling protocols, $|T| = O(m^2)$.

This significantly improves the practical bounds. In the product graph, a node $(t, w)$ requires $\operatorname{wr}(w) = \operatorname{pred}(t)$; for a fixed $t$, there are $O(|T|/m)$ valid witnesses $w$. The product graph therefore has $O(|T|^2/m) = O(m^3)$ nodes and $O(|T|^4/m^3) = O(m^5)$ arcs. Each iteration costs $O(m^5)$ for SCC computation.

In practice, convergence occurs in very few iterations---at most 6 across all 4{,}349 protocols tested with $m$ up to 15---giving practical running time of $O(m^5)$. Pre-pruning $T$ by removing transitions not on any H-cycle (an $O(m^4)$ step) further reduces the product graph size, often dramatically.
\end{remark}

\begin{remark}[Necessity of the product graph]
A simpler algorithm operating directly on transitions and shadows---checking that each H-edge's shadow has an individual witness, without enforcing walk coherence---runs in $O(m^4)$ and is complete (never misses a livelock). However, it produces false positives: across 2{,}215 test protocols, it disagreed with the product graph on 55 cases (2.5\%), reporting livelocks where none exist. The product graph eliminates these false positives by enforcing that witnesses form coherent closed walks, not merely that individual shadows are witnessable. The $O(m^5)$ practical cost of the product graph is the price of soundness.
\end{remark}

\begin{remark}[$G^* = \emptyset$ and non-tileability]
\label{rem:non-tile}
The livelock-freedom guarantee from $G^*(T) = \emptyset$ may extend beyond periodic tileability. We conjecture that $G^*(T) = \emptyset$ implies the corresponding tile set cannot tile the infinite plane at all---not merely that no periodic tiling exists.

\emph{Proof sketch.} Consider any infinite tiling. At any two adjacent rows, the tiling defines a bi-infinite equivariant walk in $G_\times(T)$. Since $G_\times(T)$ is finite (at most $|T|^2$ nodes), every node visited by the bi-infinite walk is \emph{recurrent}---visited infinitely often in both directions. Every arc between recurrent nodes lies on a cycle (the walk departs and returns infinitely often). The tiling provides, at the next level, a bi-infinite equivariant walk whose arcs serve as backward arcs for the first level; these are also recurrent by the same argument. The union of all recurrent arcs across all adjacent row-pairs is a finite subset of $G_\times(T)$ satisfying cyclicity and backward closure---a non-empty witness-closed subgraph. Therefore $G^* \neq \emptyset$.

Contrapositively: $G^*(T) = \emptyset$ implies the tile set cannot tile the infinite plane. This is strictly stronger than ``no livelock.'' Note that this does \emph{not} contradict the undecidability of the domino problem~\cite{wang-tiles,berger}: $G^* = \emptyset$ is a sufficient condition for non-tileability, not a necessary one. Tile sets that cannot tile the plane may still have $G^* \neq \emptyset$ (local equivariant structure that does not extend globally). The domino problem remains undecidable because $G^* \neq \emptyset$ is inconclusive. Table~\ref{tab:landscape} summarizes the computability landscape.
\end{remark}

\begin{table}[ht]
\centering
\caption{Computability landscape for self-disabling tile sets. The three tileability cases are mutually exclusive. Non-tileability is $\Sigma^0_1$ (recognizable by patch enumeration); periodic tileability is $\Sigma^0_1$ (recognizable by torus enumeration); aperiodic tileability is in $\Pi^0_1$ (neither recognizer halts).}
\label{tab:landscape}
\smallskip
\begin{tabular}{l|ccc}
\hline
 & \textbf{Non-tileable} & \textbf{Aperiodic} & \textbf{Periodic} \\
\hline
$G^* = \emptyset$ & \checkmark (poly-time) & impossible$^*$ & impossible \\
Cycle closes & impossible & impossible & $N \leq |T|^2$ \\
\textsc{Inconclusive} & possible & Kari & $N > |T|^2$ \\
\hline
PROP & never halts & never halts & halts \\
\hline
\multicolumn{4}{l}{\footnotesize $^*$Conditional on Remark~\ref{rem:non-tile}.}
\end{tabular}
\end{table}

\section{Experimental Validation}
\label{sec:experiments}

The algorithm and all test protocols are available at\\ \url{https://github.com/cosmoparadox/mathematical-tools}.

\subsection{Known Protocols}

Algorithm~\ref{alg:main} correctly identifies:
\begin{itemize}[nosep]
\item \textbf{Dijkstra's token ring} ($(1,1)$-asymmetric, $m = 3, 4, 5$): livelock, forward shift 1.
\item \textbf{$3$-Coloring} (symmetric, $m = 3$): livelock, $e \equiv K \bmod 3$.
\item \textbf{Sum-Not-2} (symmetric, $m = 3$): livelock-free.
\item \textbf{Gouda--Haddix TB} (symmetric, $m = 8$): livelock, 14 surviving transitions out of 32, multiple livelock classes.
\end{itemize}

\subsection{Randomized Stress Testing}

We generated over 4{,}300 random self-disabling protocols across six categories: random seeds with $m \in \{3, \ldots, 10\}$, very dense protocols (near-complete transition sets), very sparse protocols ($m$ to $2m$ transitions), structured adversarial (coloring and agreement variants with noise), large domain sizes $m \in \{11, \ldots, 15\}$ with targeted density, and compound-witness-like protocols (large $m$, few transitions). Transition counts ranged from 3 to 120.

For each protocol, we computed $L^*$ via Algorithm~\ref{alg:main} and cross-checked against exhaustive state-space search at small ring sizes ($K \leq 6$).

\textbf{Results:} Zero false positives. Zero false negatives across all 4{,}349 protocols tested. The algorithm converged in at most 6 iterations in every case.

\subsection{Adversarial Protocols}

Randomized testing exposed a protocol ($m = 8$, 13 transitions) where a prior approach based on individual shadow checking produced a false positive: every shadow had a witness, but no \emph{coherent} witness walk existed. Shadow $(6, 5)$ was the bottleneck: no transition in $T$ has $\operatorname{own} = 6$, $\operatorname{wr} = 5$. The product graph correctly identifies this protocol as livelock-free, demonstrating the necessity of the product graph over local shadow-by-shadow analysis.

\subsection{K\&E Adversarial Protocol}

As a test of the product graph's faithfulness on structured adversarial inputs, we analyzed a protocol derived from Klinkhamer and Ebnenasir's SE tiling construction~\cite{klinkhamer-tr}: a symmetric self-disabling protocol with $m = 15$ and $|T| = 17$ transitions designed to exhibit complex livelock behavior. The product graph converges in a single iteration with all 17 transitions surviving ($|L^*| = 17$). Backtracking confirms the livelock: a simple cycle of length 6 backward-closes at depth 6.

\subsection{Kari's Aperiodic Tiles}

As the most stringent test of our algorithm, we applied Klinkhamer and Ebnenasir's Lemma~4.8 gadget~\cite{klinkhamer-tr} to Kari's 14-tile aperiodic Wang tile set~\cite{kari-aperiodic}. Kari proved that these tiles tile the plane but admit no doubly-periodic tiling. The gadget converts the 14 tiles to a 54-transition self-disabling protocol with $m = 24$.

The algorithm produces $G^* \neq \emptyset$: 272 arcs survive the fixed-point pruning, with all 54 transitions in $L^*$. However, the backtracking phase examines over 5{,}000 simple cycles and finds that \emph{every} backward chain dies---no cycle closes. The algorithm correctly reports \textsc{Inconclusive}, consistent with Kari's aperiodicity proof.

This demonstrates the gap between arc-level and cycle-level backward closure (Remark~\ref{rem:gap}): the arc-level structure of $G^*$ admits rich cycle structure, but the cycle-level backward propagation correctly rejects every candidate. Across all 4{,}349 protocols tested---including Kari's---the algorithm has never produced a wrong answer.

\section{Extension to Non-Self-Disabling Protocols}
\label{sec:non-sd}

The product graph and the theory of equivariant walks assume self-disabling protocols, where firing a transition disables it until the predecessor changes value. Many practical protocols violate this constraint: under a fixed predecessor value, a process may fire a sequence of transitions---a \emph{burst}---before the predecessor acts. We show that a simple protocol transformation reduces the non-self-disabling case to the self-disabling framework.

\subsection{Protocol Transformation}

\begin{definition}[Burst closure]
\label{def:burst}
For a transition set $T$ and a predecessor value $p$, define the \emph{chain graph} $C_p$ as the directed graph on domain values with edge $o \to w$ for each $(p, o, w) \in T$. The \emph{burst closure} $\hat{T}$ is:
\[
\hat{T} = T \cup \{(p, o, w') : w' \text{ is reachable from } o \text{ in } C_p,\; o \neq w'\}
\]
\end{definition}

The burst closure adds, for each predecessor value $p$, all transitions $(p, o_i, o_j)$ where $o_j$ is reachable from $o_i$ via any chain of transitions under the same predecessor. These macro-transitions represent the net effect of a process firing multiple transitions in succession while its predecessor's value remains $p$.

\begin{proposition}[Local cycle detection]
If $C_p$ contains a directed cycle for any predecessor value $p$, then the protocol admits a trivial livelock: a process can fire transitions indefinitely under fixed predecessor value $p$.
\end{proposition}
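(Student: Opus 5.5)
The plan is to build the livelock explicitly from the cycle in $C_p$. Suppose $C_p$ contains a directed cycle $o_0 \to o_1 \to \cdots \to o_{r-1} \to o_0$, so that $(p, o_i, o_{i+1 \bmod r}) \in T$ for every $i$. I would take any ring size $K \ge 2$ and a configuration in which some process $P_i$ has $v_i = o_0$ and its predecessor has $v_{i-1} = p$.

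\emph{First step.} Check that $P_i$ is always enabled along the cycle. When $v_i = o_j$ and $v_{i-1} = p$, the transition $(p, o_j, o_{j+1})$ is enabled at $P_i$, by the definition of enabledness.

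\emph{Second step.} Construct a periodic execution. Under a scheduler that selects only $P_i$, the process fires $(p,o_0,o_1), (p,o_1,o_2), \ldots, (p,o_{r-1},o_0)$. Since no other process moves, $v_{i-1}$ stays equal to $p$, and after $r$ steps the global configuration is identical to the starting one. This is a global periodic execution of period $r$ in which a process changes state at every step, so it is an infinite non-terminating execution.

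\emph{Third step (legitimacy).} A livelock must also never reach a legitimate state. The execution visits only the $r$ configurations on the cycle. The statement calls this a ``trivial livelock'', and I read it as meaning that the configurations on the cycle lie outside the legitimate set. Alternatively, if legitimacy is taken to be silence, as is standard when analyzing livelocks, then every configuration on the cycle has an enabled process and is therefore non-legitimate. I would state this convention explicitly.

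The main obstacle is the mismatch with the paper's torus definition of a livelock. There every row is a closed walk of the same length $N$ and adjacent rows are coupled by the zigzag, whereas in the execution above only one row moves. I would present the result as a livelock in the operational sense of the Livelock definition: a periodic global execution that never reaches legitimacy. I would also note that this is exactly the situation self-disabling excludes. The cycle contains consecutive transitions $(p,o_j,o_{j+1})$ and $(p,o_{j+1},u)$, which violate the self-disabling definition, so the torus characterization does not apply here. That is why the case has to be detected separately before passing to $\hat{T}$. If fairness is required, I would let the other processes also fire whenever they are enabled; if some of them change $v_{i-1}$, I would instead choose the configuration so that $P_{i-1}$ is disabled, for example by picking predecessor values for which no transition applies. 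I expect that fairness, rather than the construction itself, is the only delicate point.
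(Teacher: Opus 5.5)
Your argument is correct and matches the paper. The paper states the proposition without a separate proof; its own gloss (a process firing indefinitely while its predecessor holds $p$) is the whole justification. You make that explicit by fixing $v_{i-1}=p$, $v_i=o_0$ with $K\ge 2$ and scheduling only $P_i$ around the cycle, and your legitimacy convention and your remark on the mismatch with the torus definition are both apt.

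The closing fairness aside is unnecessary, since the paper assumes no fairness, and the repair you sketch there is not always available. If $T$ contains some $(a,p,w)$ with $w\neq p$ for every $a\in\mathbb{Z}_m$, then $P_{i-1}$ is continuously enabled and weak fairness forces $v_{i-1}$ away from $p$, so do not rely on that repair.
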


\subsection{Correspondence}

\begin{theorem}[Execution-walk correspondence]
\label{thm:burst-correspondence}
Let $T$ be an arbitrary (possibly non-self-disabling) transition set for a unidirectional ring protocol, and let $\hat{T}$ be its burst closure with no local cycles. Then:
\begin{enumerate}[nosep]
\item Every livelock of the protocol under $T$ corresponds to an equivariant closed walk in $G_\times(\hat{T})$.
\item Every equivariant closed walk in $G_\times(\hat{T})$ corresponds to a valid execution of the protocol under $T$.
\end{enumerate}
\end{theorem}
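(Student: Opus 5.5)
The plan is to compress bursts into single macro-steps and then reuse the torus argument behind Theorem~\ref{thm:necessary} and Theorem~\ref{thm:sufficient}.

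\textbf{Part 1: from a livelock to a walk in $G_\times(\hat{T})$.} Fix a livelock of the ring under $T$: a periodic execution of size $K$, returning to the same global configuration after some number of steps. For each process $P_i$, group its consecutive firings into \emph{maximal bursts}, i.e.\ runs of firings with no firing of $P_{i-1}$ in between. During such a run the predecessor value is some fixed $p$, so the run traces a path $o \to \cdots \to w'$ in $C_p$. Because $\hat{T}$ has no local cycles, $C_p$ is acyclic, so $o \neq w'$ and the composite move $(p,o,w')$ lies in $\hat{T}$ by Definition~\ref{def:burst}.

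Replacing every burst by its macro-transition gives, for each $P_i$, a periodic walk in the H-graph $H(\hat{T})$. The substantive step is to show that these compressed walks satisfy the zigzag discipline. Concretely, we want to re-index the execution so that, between consecutive macro-steps of $P_i$, its predecessor performs exactly one macro-step. To get this, refine the grouping: cut $P_i$'s firings wherever $P_{i-1}$ changes value. Any stretch in which $P_i$ does not fire between two macro-steps of $P_{i-1}$ is then handled by merging those two predecessor steps, using transitivity of reachability in $C_p$. We choose the granularity consistently around the ring, by taking the common refinement over one period. After this, equivariance conditions (1)--(4) hold at each step, exactly as in the zigzag of Section~\ref{sec:product}.

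The rows then all have the same length $N$ (one period of the re-indexed execution), so the argument of Theorem~\ref{thm:necessary} yields closed walks in $G_\times(\hat{T})$, one per adjacent pair of rows.

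\textbf{Part 2: from a walk in $G_\times(\hat{T})$ to an execution under $T$.} Take an equivariant closed walk in $G_\times(\hat{T})$, or a closed backward chain as in Theorem~\ref{thm:sufficient}. Schedule it in the zigzag order. Each macro-transition $(p,o,w')\in\hat{T}\setminus T$ is expanded into a chain of $T$-transitions along a path $o\to\cdots\to w'$ in $C_p$. These are all enabled, because the predecessor's value remains $p$ throughout: in the zigzag, the predecessor's next step $w_{k+1}$ occurs only after $t_k$ has completed (condition~1). Intermediate values of $P_i$ are invisible to $P_i$'s own predecessor and are seen only by $P_{i+1}$. The schedule places $P_{i+1}$'s step after $P_i$'s whole burst, so $P_{i+1}$ reads only $w'$, as condition~(2) requires. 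Hence the expansion is a valid interleaving, and closure of the walk gives a periodic execution under $T$.

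\textbf{Main obstacle.} The hard step is the re-indexing in Part~1. Originally a process may fire zero times between two predecessor moves, or the predecessor may move several times while $P_i$ is idle. Equivariance, however, demands a strict one-to-one alternation. My first attempt will merge consecutive idle-gap predecessor moves into single $\hat{T}$ macro-steps. I still need to verify that such merged moves remain witnesses for their own predecessor coupling. This is a propagation constraint around the ring, so the merge must be consistent for all $K$ processes simultaneously. I would handle it by a fixed-point coarsening over one global period, which is finite.

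If fully consistent merging fails, the fallback is to weaken claim~1 to ``corresponds up to stuttering'' and check that the necessary condition (Theorem~\ref{thm:necessary}) still goes through, since that is what soundness of \textsc{Free} relies on.
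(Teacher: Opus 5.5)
Your route is the paper's: collapse each burst between predecessor changes into a $\hat{T}$ macro-transition for claim~1, and expand macro-transitions back into $C_p$-paths for claim~2. Your claim~2 argument is fine. The gap is the step you flag yourself as the main obstacle, and your proposed fix does not work.

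Merging two consecutive moves of $P_{i-1}$ ``by transitivity of reachability in $C_p$'' is possible only if both moves occur under the same value of $P_{i-2}$. If $P_{i-2}$ moved in between, the composite has no single pred value and in general is not in $\hat{T}$. The other direction also fails. If $P_i$ reads an intermediate value of a burst of $P_{i-1}$, that burst cannot be collapsed, because collapsing it invalidates $P_i$'s step. Nor can it be kept as two steps: there is no move of $P_{i-2}$ between them, so conditions~(1)--(2) would require a witness whose own and written values coincide, and $\hat{T}$ has none. The paper's Direction~1 passes over exactly this point when it asserts that each macro-transition reads the value written by the predecessor's preceding macro-transition.

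No global coarsening can repair this, because claim~1 is false. Let $m=3$ and $T=\{a,b,c\}$ with $a=(2,0,1)$, $b=(0,1,2)$, $c=(0,2,0)$. Then $C_0$ is the path $1\to2\to0$ and $C_2$ is the single edge $0\to1$. So there are no local cycles, and $\hat{T}=T\cup\{d\}$ with $d=(0,1,0)$.

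Take the ring of size $K=3$ and start from $(v_0,v_1,v_2)=(0,0,2)$. The schedule $P_0,P_2,P_0,P_1,P_0,P_1,P_2,P_1,P_2$ fires $a,c,b,a,c,b,a,c,b$, each transition enabled when it fires, and returns to $(0,0,2)$. This is a livelock. Each process fires $a$ on reading the intermediate value $2$ of its predecessor's burst $1\to2\to0$, and it skips the value $1$, at which it is disabled.

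Now look at $G_\times(\hat{T})$.
\begin{itemize}
\item An arc leaving a node with $t$-component $b$ needs a witness whose own and written values are both $0$.
\item An arc leaving a node with $t$-component $c$ or $d$ needs a witness of the form $(\cdot,0,2)$.
\end{itemize}
Neither kind of witness exists, so the only arcs are $(a,b)\to(b,c)$ and $(a,b)\to(d,c)$. The graph is acyclic and $G^*(\hat{T})=\emptyset$, so the paper's corollary would declare this protocol livelock-free.

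Since $G_\times(\hat{T})$ has no closed walk at all, your ``up to stuttering'' fallback cannot save the necessary condition either. A correct statement needs a different construction, one in which skipped and partially observed predecessor values are representable; a better proof of this one will not suffice.
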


\begin{proof}
\textbf{Direction 1.} In a livelock under $T$, each process $P_i$ executes a periodic sequence of transitions. Between consecutive changes of $P_{i-1}$'s value, process $P_i$ fires a burst of transitions under a fixed predecessor value $p$---a path in $C_p$ from some $o$ to some $w'$. The transition $(p, o, w') \in \hat{T}$ captures this burst as a single macro-transition. The sequence of macro-transitions at $P_i$ forms a closed walk in the H-graph of $\hat{T}$, and the equivariance between $P_i$'s macro-transitions and $P_{i-1}$'s macro-transitions is preserved because each macro-transition's predecessor value is $P_{i-1}$'s value \emph{at the start of the burst}---exactly the value written by $P_{i-1}$'s preceding macro-transition.

\textbf{Direction 2.} An equivariant closed walk in $G_\times(\hat{T})$ consists of macro-transitions $(p, o, w') \in \hat{T}$. Each macro-transition corresponds to a realizable path in the chain graph $C_p$: a sequence of original transitions $(p, o, o_1), (p, o_1, o_2), \ldots, (p, o_{k-1}, w')$ all in $T$. Since the predecessor value $p$ is fixed during the burst (the equivariance conditions ensure $P_{i-1}$ does not change value until $P_i$ completes its burst), this sequence is a valid execution segment. Concatenating the burst segments across all positions and all processes yields a valid execution under $T$.
\end{proof}

\begin{corollary}
The product graph algorithm applied to $\hat{T}$ is a sound livelock analyzer for arbitrary unidirectional ring protocols: $G^*(\hat{T}) = \emptyset$ implies the protocol under $T$ is livelock-free for all ring sizes.
\end{corollary}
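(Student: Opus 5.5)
The plan is to chain Theorem~\ref{thm:burst-correspondence}(1) with Theorem~\ref{thm:necessary}, applied to $\hat{T}$ rather than $T$. The one subtlety is that Theorem~\ref{thm:necessary} and the definition of $G^*$ assume a self-disabling transition set, and $\hat{T}$ need not be self-disabling. I will therefore not invoke Theorem~\ref{thm:necessary} as a black box. Instead I will redo its short argument directly on the walks produced by the burst correspondence, which only uses the four equivariance conditions. I will also dispose of the degenerate case first: if some $C_p$ has a directed cycle, the hypothesis of Theorem~\ref{thm:burst-correspondence} fails. In that case I will either restrict the corollary to protocols with no local cycles, which is the regime in which $\hat{T}$ is defined, or note that the analyzer reports such protocols separately via the local cycle detection proposition. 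Either way, the corollary is about the acyclic case.

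\textbf{Step 1.} Suppose, for contradiction, that the protocol under $T$ admits a livelock at some ring size $K$. By Theorem~\ref{thm:burst-correspondence}(1), each process $P_i$ yields a closed macro-walk $\hat{W}^{(i)}$ in the H-graph of $\hat{T}$, and adjacent macro-walks satisfy the four equivariance conditions. I must first make sure all rows have a common length $N$. Each burst of $P_i$ is triggered by exactly one change of $P_{i-1}$'s value, so burst counts per period agree between adjacent processes. Going around the ring, all $K$ rows then have the same number of macro-steps over one global period.

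\textbf{Step 2.} Repeat the proof of Theorem~\ref{thm:necessary} verbatim on these walks. Each adjacent pair $(\hat{W}^{(i)}, \hat{W}^{(i-1)})$ gives a closed walk in $G_\times(\hat{T})$, which establishes cyclicity. The $w$-walk of one coupling is the $t$-walk of the next, and this wraps after $K$ couplings, which establishes backward closure. Hence $G_\times(\hat{T})$ contains a non-empty witness-closed subgraph.

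\textbf{Step 3.} By the uniqueness remark, $G^*(\hat{T})$ contains every witness-closed subgraph, so $G^*(\hat{T}) \neq \emptyset$. Theorem~\ref{thm:algo-correctness} uses only the definitions, not self-disabling, so Algorithm~\ref{alg:main} on $\hat{T}$ computes this nonempty $G^*(\hat{T})$. This contradicts the hypothesis $G^*(\hat{T}) = \emptyset$.

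The main obstacle is Step 1, specifically the zigzag one-to-one correspondence for $\hat{T}$. A burst could be empty: $P_{i-1}$ might change value twice while $P_i$ stays idle, or $P_i$ might be disabled under the intermediate value. Then the macro-steps of $P_i$ and $P_{i-1}$ are not in one-to-one correspondence. My first attempt will be to handle this by stuttering: merge consecutive predecessor moves into one macro-transition of $P_{i-1}$, namely a reachable pair in the appropriate chain graph. The concern is that such merged moves span different pred values and may not lie in $\hat{T}$. If that fails, I will argue as Theorem~\ref{thm:burst-correspondence}(1) implicitly does: in a livelock, if $P_i$ stays idle across a change of $P_{i-1}$, then it is simply the predecessor's consecutive bursts that get coalesced. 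I would then count cycles of the $t$/$w$ alternation rather than clock time, and accept the paper's correspondence as given for the common period $N$.
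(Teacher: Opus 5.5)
\textbf{Step 1 has a genuine gap, and it cannot be repaired, because the corollary is false even with no local cycles.} Your exclusion of local cycles is necessary. If a torus of macro-walks were available, Steps 2 and 3 would be fine: the rest of the proof of Theorem~\ref{thm:necessary} and all of Theorem~\ref{thm:algo-correctness} are purely graph-theoretic and do not use self-disabling.

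Your count in Step 1 is wrong. The bursts of $P_i$ are delimited by the \emph{individual} moves of $P_{i-1}$, whereas the macro-steps of $P_{i-1}$ are delimited by the moves of $P_{i-2}$, so the two numbers differ in general. Equal lengths are also not what Step 2 needs. Backward closure needs the $w$-walk of the coupling $(P_i,P_{i-1})$ to be literally the $t$-walk of $(P_{i-1},P_{i-2})$. That is, $P_{i-1}$'s moves must be grouped into witnesses (as $P_i$ sees them) exactly as they are grouped into bursts (as $P_{i-2}$ sees them). Two situations break this.
\begin{itemize}[nosep]
\item In a \emph{split}, $P_i$ fires while $P_{i-1}$ is midway through a burst $a\to b\to c$ under $p$. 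Then $P_i$ reads $b$, which no macro-step of $P_{i-1}$ writes. You also cannot use $(p,a,b),(p,b,c)$ as consecutive $t$-steps, since the witness between them would need $\operatorname{own}=\operatorname{wr}=p$.
\item In an \emph{absorb}, $P_{i-1}$ moves $a\to b$ under $p$ and then $b\to c$ under $p'\neq p$ while $P_i$ idles. This requires a witness $(\cdot,a,c)\in\hat T$. That is exactly your stuttering worry, and it is justified.
\end{itemize}
Deferring to Theorem~\ref{thm:burst-correspondence}(1) only relocates the gap. Its proof asserts without argument that each macro-step's pred value is the value written by $P_{i-1}$'s preceding macro-transition, and a split falsifies this.

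Here is a counterexample. Take $m=8$ with values $x_0,\dots,x_3,y_0,\dots,y_3$ and
\[
\begin{aligned}
T=\{&(y_2,x_0,x_1),(y_2,x_1,x_2),(y_0,x_2,x_3),(y_0,x_3,x_0),\\
&(x_1,y_0,y_1),(x_2,y_1,y_2),(x_3,y_2,y_3),(x_0,y_3,y_0)\}.
\end{aligned}
\]
Every $C_p$ is acyclic, and $\hat T$ adds only $(y_2,x_0,x_2)$ and $(y_0,x_2,x_0)$.
\begin{itemize}[nosep]
\item \emph{A livelock exists.} Use a ring of size $8$ started at $(x_0,y_0,x_0,y_0,x_2,y_2,x_2,y_2)$. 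Let $P_0,P_4,P_2,P_6$, in this order, each fire twice, with its successor firing right after each of its moves. This reaches $(x_2,y_2,x_2,y_2,x_0,y_0,x_0,y_0)$. Running the same schedule again returns to the start, giving a livelock of period $32$.
\item \emph{$G^*(\hat T)$ is empty.} No arc of $G_\times(\hat T)$ has an $x$-transition as $t$-component. By conditions (1)--(3), its connecting witness would need both $\operatorname{own}$ and $\operatorname{wr}$ in $\{y_0,y_2\}$, and $\hat T$ has no such transition. So every arc has a $y$-transition as $t$-component and $x$-transitions as $w$-components. None of these arcs can have a backward arc, hence $G^*(\hat T)=\emptyset$.
\end{itemize}
The livelock exhibits both failure modes. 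Each $y$-process fires on $x_1$ and $x_3$, in the middle of its predecessor's bursts. Each $x$-process waits through $y_0\to y_1\to y_2$, performed under $x_1$ and then $x_2$, and $\hat T$ contains no witness $(\cdot,y_0,y_2)$.

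So the statement needs an extra hypothesis or a different transformation. As stated, no version of Step 1 can be proved.
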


The burst closure $\hat{T}$ is generally \emph{not} self-disabling---it retains transitions $(p, o_1, o_2)$ and $(p, o_2, o_3)$ simultaneously. The product graph algorithm operates on $\hat{T}$ without modification: the four equivariance conditions, SCC pruning, and backward closure are purely graph-theoretic and do not require self-disabling.

\section{Extension to $(1,1)$-Asymmetric Protocols}
\label{sec:asymmetry}

For a $(1,1)$-asymmetric ring with transition sets $T_0$ (distinguished process $P_0$) and $T_o$ (all other processes), three product graphs capture the three coupling types:

\begin{enumerate}[nosep]
\item $G_1 = G_\times(T_o, T_0)$: predecessor fires $T_o$, successor fires $T_0$. The $P_o \to P_0$ coupling.
\item $G_2 = G_\times(T_0, T_o)$: predecessor fires $T_0$, successor fires $T_o$. The $P_0 \to P_o$ coupling.
\item $G_3 = G_\times(T_o, T_o)$: both fire $T_o$. The bulk $P_o \to P_o$ coupling.
\end{enumerate}

Each graph is built independently using the same four equivariance conditions.

\subsection{Cross-Graph Backward Closure}

In the symmetric case, the backward closure check verifies that each $w$-edge has a predecessor arc \emph{within the same product graph}. For asymmetric protocols, the predecessor coupling may reside in a \emph{different} product graph.

The three graphs form a ring of backward dependencies:
\[
G_1 \to G_3 \to G_3 \to \cdots \to G_3 \to G_2 \to G_1
\]
An arc in $G_1$ with $w$-pair $(w_1, w_2)$ requires a predecessor arc with $t$-pair $(w_1, w_2)$. Since the predecessor of $P_0$'s predecessor fires $T_o$, this predecessor arc lies in $G_3$, not $G_1$. Similarly, arcs in $G_2$ require predecessor arcs in $G_1$, and arcs in $G_3$ require predecessor arcs in either $G_3$ (bulk) or $G_2$ (at the boundary).

\subsection{Joint Algorithm}

The algorithm extends naturally: apply SCC, square, and backward closure to each graph, but the backward closure check references the \emph{appropriate neighboring graph} for the predecessor arc:

\begin{itemize}[nosep]
\item For $G_1$ ($P_o \to P_0$): backward closure checks predecessor arcs in $G_3$.
\item For $G_2$ ($P_0 \to P_o$): backward closure checks predecessor arcs in $G_1$.
\item For $G_3$ ($P_o \to P_o$): backward closure checks predecessor arcs in $G_3$ itself (and in $G_2$ for the boundary process).
\end{itemize}

The square support $S_t$ is computed per graph: $S_t^{(k)}$ is the set of $t$-components on cycles in $G_k$. The backward closure check for an arc in $G_k$ with $w$-pair $(w_1, w_2)$ verifies that the neighboring graph contains an arc $(w_1, u) \to (w_2, v)$ with $u, v \in S_t^{(\text{neighbor})}$.

Pruning in one graph may invalidate arcs in neighboring graphs through the cross-graph backward closure, triggering a cascade that propagates bidirectionally around the ring of graphs until quiescence.

\subsection{Termination and Correctness}

\begin{theorem}
The $(1,1)$-asymmetric algorithm terminates. Soundness: if all three graphs collapse to $\emptyset$, the protocol is livelock-free for all ring sizes. Completeness: every livelock maps into the three graphs as surviving arcs.
\end{theorem}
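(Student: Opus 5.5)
The theorem has three parts: termination, soundness of the \textsc{Free} verdict, and completeness of the mapping. I would prove them in the order termination, completeness, soundness, because soundness follows from completeness as its contrapositive.

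\textbf{Termination.} Each of $G_1,G_2,G_3$ is built once from $T_0$ and $T_o$. Each therefore has at most $\max(|T_0|,|T_o|)^4$ arcs. The joint iteration only deletes arcs: the SCC step, the square step and the cross-graph backward step each remove arcs and never add any. Each round that does not reach quiescence removes at least one arc from the disjoint union of the three graphs. The number of rounds is therefore bounded by the total arc count, exactly as in the complexity theorem for the symmetric case. This part is routine.

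\textbf{Completeness.} Fix a livelock on a $(1,1)$-asymmetric ring of size $K\ge 2$ with local period $N$. By the torus characterization of Klinkhamer and Ebnenasir it is a $K\times N$ tiling. Row $0$ is a closed walk over $T_0$, and rows $1,\dots,K-1$ are closed walks over $T_o$. Following the proof of Theorem~\ref{thm:necessary}, each adjacent row pair $(W^{(i)},W^{(i-1)})$ yields a closed walk of length $N$ in the product graph of the matching type.
\begin{itemize}[nosep]
\item The pair $(W^{(0)},W^{(K-1)})$ lands in $G_1$.
\item The pair $(W^{(1)},W^{(0)})$ lands in $G_2$.
\item All remaining pairs land in $G_3$.
\end{itemize}
This assignment must be checked against the orientation convention for the arguments of $G_\times(\cdot,\cdot)$. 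Let $H_k$ be the set of arcs collected in $G_k$. Every arc of $H_k$ lies on a closed walk in $H_k$, which gives cyclicity. The $w$-walk of coupling $i$ is the $t$-walk of coupling $i-1$, and coupling $i-1$ lies in the neighbouring graph prescribed by the dependency ring $G_1\to G_3\to\cdots\to G_3\to G_2\to G_1$. This gives cross-graph backward closure.

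For $K=2$ there is no bulk coupling, so $G_3$ contributes nothing and $G_1$ must back-reference $G_2$. The induction has to allow this boundary case explicitly, or the dependency list must be read as ``$G_3$ or $G_2$'' in the way it already is for $G_3$.

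I would then repeat the completeness argument of Theorem~\ref{thm:algo-correctness} by induction on rounds. No arc of $(H_1,H_2,H_3)$ is ever removed:
\begin{itemize}[nosep]
\item SCC does not remove it, since it lies on a cycle inside $H_k$.
\item The square step does not remove it, since its $w$-components are $t$-components on cycles of the neighbouring $H$, hence lie in the neighbour's $S_t$.
\item The backward step does not remove it, since the required predecessor arc survives by the induction hypothesis.
\end{itemize}

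\textbf{Main obstacle.} The hard part is making the square and backward steps consistent with the per-graph supports. The algorithm checks $w_1,w_2$ against $S_t^{(\text{neighbor})}$. The invariant must therefore be stated jointly over all three graphs: at every round, $H_k$ is contained in the surviving part of $G_k$ for all $k$ simultaneously. This joint form is needed because a removal in one graph can cascade into another. A second point to settle is that $G_3$'s check is a disjunction, passing if a predecessor arc exists in $G_3$ \emph{or} in $G_2$. Completeness needs the algorithm to accept an arc when either witness exists, because the bulk coupling adjacent to $P_1$ has its predecessor in $G_2$ while the others have theirs in $G_3$. I would state this disjunction explicitly as part of the algorithm's specification.

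\textbf{Soundness.} Soundness is then immediate. If a livelock existed, completeness would give non-empty surviving sets in $G_1$ and $G_2$, since they contain the boundary couplings. Hence if all three graphs collapse to $\emptyset$, no livelock exists for any ring size. In fact the collapse of $G_1$ or $G_2$ alone already suffices.
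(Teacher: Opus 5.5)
Your proposal is correct and follows the paper's route. Termination comes from monotone deletion of the finitely many arcs in $G_1,G_2,G_3$. Completeness comes from the $K$ couplings of a livelock forming arcs in the three graphs that satisfy cyclicity and cross-graph backward closure, so they survive every pruning round; you make the joint round-by-round induction explicit, which the paper leaves implicit. Soundness is the contrapositive.

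Your $K=2$ remark is a genuine correction that the paper's sketch misses. As specified, $G_1$ back-references only $G_3$, so the $G_1$ arcs of a $K=2$ livelock can be pruned, for instance when $G_3$ is empty. Letting $G_1$ also check $G_2$ is needed for the theorem to hold as stated. So is reading each support test against the graph that hosts the coupling those transitions belong to: the backward step's test on the predecessor arc's witnesses $u,v$ should use the support two steps back in the dependency ring. Your joint induction then covers that test as well.
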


\begin{proof}
\textbf{Termination.} The total number of arcs across all three graphs is finite, each iteration removes at least one, and arcs are never re-added.

\textbf{Completeness.} A livelock provides explicit walks and couplings that form surviving arcs in each graph. The cross-graph backward closure is satisfied because adjacent processes' walks serve as each other's predecessor couplings.

\textbf{Soundness.} The arc-level backward closure is a necessary condition for livelocks (by completeness, every livelock produces surviving arcs). If all three graphs collapse to $\emptyset$, no livelock can exist. As in the symmetric case, non-empty surviving graphs require backtracking verification to confirm livelock existence.
\end{proof}

\begin{remark}
For general $(l,q)$-asymmetry, one product graph per distinct coupling type suffices, with backward closure checks referencing the appropriate neighbor in the coupling ring. We leave the full treatment to future work.
\end{remark}

\section{Related Work}
\label{sec:related}

\paragraph{Automata, tiling languages, and symbolic dynamics.} Our product graph can be viewed as a finite automaton for a restricted tiling language. In the Vardi--Wolper automata-theoretic approach to model checking~\cite{vardi-wolper}, a product automaton decides temporal properties by reducing them to cycle detection in a finite graph. Our construction follows the same pattern, but the ``language'' is a set of valid SE torus tilings under the self-disabling constraint, where each symbol is a transition--witness pair drawn from an alphabet of size $|T|^2$. The product graph is the automaton: its cycles are valid row-pairs, and backward closure is the self-referential constraint ensuring that the automaton's output (the $w$-walk) is itself accepted as input (the $t$-walk of the next row). This connects to symbolic dynamics~\cite{lind-marcus}: two-dimensional sofic shifts---sets of valid tilings---are generally undecidable (Wang~\cite{wang-tiles}, Berger~\cite{berger}). Self-disabling collapses the two-dimensional structure into a one-dimensional shift with a self-referential constraint (backward closure). The product graph provides a finite representation of this shift, enabling polynomial-time livelock-freedom verification.

\paragraph{Parameterized verification: general undecidability.} Apt and Kozen~\cite{apt-kozen} showed that verifying LTL properties for parameterized systems is $\Pi^0_1$-complete. Suzuki~\cite{suzuki} strengthened this to unidirectional rings. Esparza~\cite{esparza} surveys the complexity landscape of parameterized verification. These results establish the general barrier; our work identifies a structural regime where effective verification is possible, even if full decidability remains open.

\paragraph{Cutoff results.} Emerson and Namjoshi~\cite{emerson-namjoshi} introduced cutoff-based reasoning for rings: if a property holds for rings up to a cutoff size, it holds for all sizes. Emerson and Kahlon~\cite{emerson-kahlon} extended cutoffs to guarded protocols and ring-based message passing. These results reduce parameterized verification to finite model checking at a fixed size; our approach is complementary, reducing the problem to a fixed-point computation on the product graph independent of any specific ring size.

\paragraph{Well-quasi-orderings.} Abdulla et al.~\cite{abdulla-wqo} developed a general framework for infinite-state systems using well-quasi-orderings (WQOs): if states are well-quasi-ordered and transitions are monotonic, backward reachability terminates. Finkel and Schnoebelen~\cite{finkel-wsts} systematized this as the theory of well-structured transition systems (WSTSs). Our approach is different in mechanism---it arises from the bounded space of transition--witness pairs in the equivariant product, not from a WQO on states---but follows a similar pattern: lifting to a space where a structural property becomes explicit, enabling effective verification of livelock-freedom. Recent work extends WSTS techniques to threshold automata for round-based distributed algorithms~\cite{baumeister-jacobs}.

\paragraph{Tiling and undecidability.} Wang~\cite{wang-tiles} introduced the domino problem: does a given tile set tile the plane? Berger~\cite{berger} proved undecidability. The periodic variant---does the tile set admit a doubly-periodic tiling?---is also undecidable~\cite{gurevich-koryakov}. Kari~\cite{kari-aperiodic} proved that the domino problem remains undecidable for NW-deterministic tile sets. Klinkhamer and Ebnenasir~\cite{klinkhamer-ebnenasir} reduce the periodic domino problem to livelock detection via an SE (South=East) tiling restriction that enforces self-disabling. Our product graph provides a polynomial-time necessary condition ($G^* = \emptyset$ proves livelock-freedom) and an empirically complete sufficient condition (backtracking verification). The gap between the two---arc-level vs.\ cycle-level backward closure---is precisely where the $\Sigma^0_1$-completeness of livelock detection manifests.

\paragraph{Algebraic characterization and combinatorial lifting.} Farahat~\cite{farahat-diss} introduced an algebraic framework for livelock analysis in which livelocks are characterized by cyclic equivariance relations on the transition space---closed walks in the H-graph with compatible shadows. Farahat and Ebnenasir~\cite{farahat-icdcs2012} showed that the token-passing discipline induced by self-disabling enables local reasoning about global convergence properties. Klinkhamer and Ebnenasir~\cite{klinkhamer-ebnenasir} lifted Farahat's algebraic characterization to the combinatorial setting of torus tilings, establishing the equivalence between livelocks and doubly-periodic tilings of a $K \times N$ torus via the ``leads'' relation---which is precisely the equivariance coupling. Their torus characterization revealed the connection to the periodic domino problem, yielding $\Sigma^0_1$-completeness of livelock detection. Our product transition graph builds on their framework: it realizes the equivariance coupling as a bounded combinatorial object whose emptiness proves livelock-freedom.

\paragraph{Relationship to Klinkhamer and Ebnenasir's undecidability result and PROP.}
Their characterization~\cite{klinkhamer-ebnenasir} is correct and complete: a livelock exists if and only if $m$ periodic propagations of period $n$ exist, each leading the next. Their torus formulation revealed the connection to the periodic domino problem, yielding $\Sigma^0_1$-completeness for livelock detection and $\Pi^0_1$-completeness for livelock-freedom verification. Their semi-algorithm PROP~\cite{klinkhamer-tr} searches for propagations by incrementally increasing scope parameters $(m, n)$. PROP is sound: any reported livelock is genuine. However, PROP is a pure semi-decider for livelock existence ($\Sigma^0_1$): it can find livelocks but can never prove livelock-freedom---if no livelock exists, PROP runs indefinitely without reaching a conclusion.

Our algorithm is strictly stronger in three respects. First, $G^*(T) = \emptyset$ proves livelock-freedom in polynomial time---a capability PROP entirely lacks. Neither PROP nor any parallel enumeration strategy (e.g., interleaving a periodic-tiling search with a patch-obstruction search) can prove freedom in finite time for aperiodic cases; our algorithm returns \textsc{Inconclusive} in finite time. Second, for protocols with short-period livelocks (all practical protocols in our testing), the product graph identifies the livelock via backtracking in bounded time, whereas PROP's complexity is exponential in the scope parameters. Third, our algorithm always terminates with one of three outcomes (\textsc{Free}, \textsc{Livelock}, \textsc{Inconclusive}), whereas PROP may run indefinitely. PROP's advantage is that it can in principle find livelocks with period $N > |T|^2$ by increasing scope; however, this has never been necessary in practice. The gap between arc-level and cycle-level backward closure (Remark~\ref{rem:gap}) is consistent with the $\Sigma^0_1$-completeness of livelock detection. We validate on a protocol derived from their SE tiling construction~\cite{klinkhamer-tr} and on Kari's 14-tile aperiodic set~\cite{kari-aperiodic} converted via their gadget (Section~\ref{sec:experiments}).

\paragraph{Product constructions and finite abstractions.} Our approach instantiates a pattern recurring across verification and combinatorics: problems that appear unbounded in their native representation become tractable when lifted to a space where the governing invariant is explicit. The Myhill--Nerode theorem reduces language equivalence to a finite quotient; synchronized products in model checking compose automata to expose global reachability; symbolic dynamics~\cite{lind-marcus} lifts shift spaces to finite graphs of forbidden patterns. Our product transition graph plays an analogous role: it lifts equivariant transition walks to a bounded graph of transition--witness pairs, enabling efficient pruning of the search space.

\paragraph{Modern parameterized synthesis and verification.} Recent approaches to parameterized verification and synthesis of self-stabilizing protocols, including SMT-based techniques~\cite{mirzaie-bonakdarpour}, threshold automata~\cite{baumeister-jacobs}, and cutoff-based methods~\cite{bloem-book}, often yield semi-decision procedures or rely on abstraction and over-approximation. Our approach is complementary: the product graph provides a polynomial-time livelock-freedom verifier ($G^* = \emptyset$), with backtracking verification for the positive case that is empirically complete on all tested instances.

\section{Conclusion}
\label{sec:conclusion}

We have developed a practical framework for livelock analysis in self-disabling unidirectional ring protocols. The result rests on three contributions:

\emph{The necessary condition} (Theorem~\ref{thm:necessary}): every livelock maps into the product graph $G_\times(T)$ as a witness-closed subgraph. Consequently, $G^*(T) = \emptyset$ proves livelock-freedom for all ring sizes simultaneously. This provides a sound and complete polynomial-time livelock-freedom verifier.

\emph{The sufficient condition} (Theorem~\ref{thm:sufficient}): any simple cycle in $G^*$ whose backward chain closes generates a genuine livelock. Backtracking search over simple cycles in $G^*$ provides a sound livelock detector that terminates for each cycle.

\emph{The algorithm} computes $G^*(T)$ by monotone fixed-point iteration (Theorem~\ref{thm:algo-correctness}), then applies backtracking verification. Across 4{,}349 protocols---including adversarial instances derived from Klinkhamer and Ebnenasir's tiling construction and Kari's aperiodic tiles---the algorithm produces conclusive results in every case with zero errors.

\paragraph{The arc-cycle gap.} The gap between arc-level backward closure (which $G^*$ enforces) and cycle-level backward closure (which the torus requires) is the fundamental limitation. Arc-level closure is a polynomial-time checkable necessary condition for livelocks; cycle-level closure is a sufficient condition verified by backtracking. Whether these coincide---equivalently, whether $G^* \neq \emptyset$ implies livelock existence---remains open. The gap is exposed by Kari's aperiodic tiles, where $G^* \neq \emptyset$ but no cycle closes, consistent with the $\Sigma^0_1$-completeness of livelock detection~\cite{klinkhamer-ebnenasir}.

\paragraph{Practical completeness.} Despite extensive search---including 10{,}000 random self-disabling protocols with domain sizes up to $m = 15$, and hundreds of Wang tile gadget protocols---we have not found a single protocol of practical purpose that admits a livelock only at local period $N > |T|^2$. The \textsc{Inconclusive} outcome occurs exclusively on protocols derived from aperiodic tiling constructions, which encode computational content that no convergence-oriented protocol would contain. For all protocols designed for self-stabilization, the algorithm produces a conclusive result.

\paragraph{$G^* = \emptyset$ and non-tileability.} As argued in Remark~\ref{rem:non-tile}, the product graph may capture more than periodic tileability: $G^*(T) = \emptyset$ likely implies non-tileability of the infinite plane, not just absence of periodic tilings. This would provide a polynomial-time certificate for a property that is only $\Sigma^0_1$ in general. The conjecture is consistent with the undecidability of the domino problem, since $G^* = \emptyset$ is a sufficient condition for non-tileability, not a necessary one (see Table~\ref{tab:landscape}).

\paragraph{Relationship to synthesis.} Klinkhamer and Ebnenasir~\cite{klinkhamer-ebnenasir} showed that the \emph{synthesis} of self-stabilizing rings is decidable---a result that might appear to be in tension with the undecidability of livelock \emph{detection}. The synthesis problem is structurally different: it specifies a legitimate predicate to converge to and searches for a protocol satisfying convergence, which constrains the transition structure. Whether this constraint implies that synthesized protocols always have bounded livelock periods is an interesting open question that we leave for future investigation.

\paragraph{Future directions.}
\begin{enumerate}[nosep]
\item \textbf{Characterizing the gap.} Under what structural conditions on $T$ does arc-level backward closure imply cycle-level backward closure? Identifying such conditions would delineate the class of protocols for which the algorithm is provably complete.

\item \textbf{Non-tileability certificate.} Formalizing the proof sketch above: does $G^*(T) = \emptyset$ imply non-tileability for all self-disabling tile sets? If so, the product graph provides a polynomial-time certificate for a property ($\Sigma^0_1$ in general) that no enumeration-based approach can verify in bounded time.

\item \textbf{Livelock classification.} Systematic enumeration and classification of livelock structures in classical ring protocols, including a circulation law governing which ring sizes support livelocks.

\item \textbf{$(l,q)$-asymmetry.} Extending the three-graph construction to arbitrary numbers of process types.

\item \textbf{Complex topologies.} For bidirectional rings (degree two), the product space would require triples of transitions. The symbolic dynamics view---where the product graph is a sofic shift whose order matches the communication degree---provides a natural framework for investigating how the communication topology affects the verification boundary.
\end{enumerate}


\appendix
\section{Detailed Examples}
\label{sec:examples}

\subsection{$3$-Coloring (Symmetric, $m = 3$)}

$T = \{(0,0,1),\; (1,1,2),\; (2,2,0)\}$. All transitions survive. One simple cycle $c_0$ of length $N = 3$, self-enabling with forward shift $s = 1$.

\subsection{Dijkstra's Token Ring ($m = 3$, $(1,1)$-asymmetric)}

$T_0 = \{(0,0,1),\; (1,1,2),\; (2,2,0)\}$, $T_o = \{(0,2,0),\; (1,0,1),\; (2,1,2)\}$.

Three product graphs: $G_\times(T_o, T_0)$, $G_\times(T_0, T_o)$, $G_\times(T_o, T_o)$. Forward shift 1 in $T_0$, shift 0 in $T_o$.

\subsection{Compound Witness Protocol (Symmetric, $m = 16$)}

\begin{align*}
T = \{&(5,2,3),\; (8,3,2),\; (2,7,9),\; (3,9,12),\; (2,12,15),\\
      &(3,15,7),\; (7,5,8),\; (9,8,5),\; (12,5,8),\; (15,8,5)\}.
\end{align*}

All 10 transitions survive. Eight simple cycles in $H(T)$. The product graph detects a livelock with local period $N = 4$. This protocol demonstrates compound witness chains: the simple H-cycle of length 2 is a dead-end in the product graph, but a compound walk of length 4 (repeating the 2-cycle twice with different witnesses) is self-sustaining.

\subsection{Trial 56: Shadow Coherence Failure ($m = 8$)}

\begin{align*}
T = \{&(0,4,2),\; (0,7,6),\; (2,2,7),\; (2,5,7),\; (3,6,0),\; (4,0,2),\; (4,0,5),\\
      &(5,2,0),\; (5,6,0),\; (6,3,4),\; (6,6,2),\; (7,0,6),\; (7,7,3)\}.
\end{align*}

All 13 transitions lie on H-graph cycles. Of the 28 H-edges, 20 have individually witnessed shadows---only 8 lack witnesses. After removing those 8 edges, the remaining H-graph still contains cycles. A transition-level analysis (individual shadow checking) reports this protocol as a livelock.

However, no livelock exists. The unwitnessed shadows---including $(0,3)$, $(3,7)$, $(4,5)$, $(5,4)$, $(6,5)$, $(7,5)$---create gaps in the witness value space. Each H-cycle can find individual witnesses for its own shadows, but those witnesses form walks whose shadows hit the gaps. The incoherence manifests two levels deep in the witness chain.

The product graph $G_\times(T)$ starts with 42 arcs. Backward closure prunes 18 arcs in the first iteration; the surviving 24 arcs collapse to $\emptyset$ in the second (no remaining SCC). Result: livelock-free, confirmed by exhaustive state-space search at all ring sizes $K \leq 8$.

This protocol demonstrates that individual shadow witnessing is strictly weaker than the product graph's backward closure. The product graph detects that no coherent witness walk exists, even when individual shadows are witnessable.

\bibliographystyle{plain}

\end{document}